\documentclass[manuscript,screen]{acmart}

\AtBeginDocument{%
  }

\setcopyright{acmlicensed}
\copyrightyear{2025}
\acmYear{2025}
\acmDOI{XXXXXXX.XXXXXXX}
\acmConference[Conference acronym 'XX]{Make sure to enter the correct
  conference title from your rights confirmation emai}{June 03--05,
  2018}{Woodstock, NY}
\acmISBN{978-1-4503-XXXX-X/18/06}

\usepackage{bm}
\usepackage{algorithm}
\usepackage[noend]{algorithmic}
\usepackage{amsmath}
\usepackage{amsfonts}
\usepackage{bbm}
\usepackage{subcaption}
\usepackage{multirow}

\newcommand{\A}{\bm{A}}
\newcommand{\B}{\bm{B}}

\newcommand{\plmi}{\,$\pm$\,}

\renewcommand{\algorithmicrequire}{\textbf{Input:}}
\renewcommand{\algorithmicensure}{\textbf{Output:}}



\begin{document}

\title{Balancing Fairness and High Match Rates in Reciprocal Recommender Systems: A Nash Social Welfare Approach}

\author{Yoji Tomita}
\email{tomita_yoji@cyberagent.co.jp}
\orcid{0009-0007-6828-4424}
\affiliation{%
  \institution{CyberAgent, Inc.}
  \city{Tokyo}
  \country{Japan}
}
\author{Tomohiko Yokoyama}
\email{tomohiko_yokoyama@mist.i.u-tokyo.ac.jp}
\orcid{0009-0004-7641-029X}
\affiliation{%
  \institution{School of Information Science and Technology, The University of Tokyo}
  \city{Tokyo}
  \country{Japan}
}

\renewcommand{\shortauthors}{Y. Tomita and T. Yokoyama}

\begin{abstract}
Matching platforms, such as online dating services and job recommendations, have become increasingly prevalent. For the success of these platforms, it is crucial to design reciprocal recommender systems (RRSs) that not only increase the total number of matches but also avoid creating unfairness among users.

In this paper, we investigate the fairness of RRSs on matching platforms. From the perspective of fair division, we define the users' opportunities to be recommended and establish the fairness concept of envy-freeness in the allocation of these opportunities. We first introduce the Social Welfare (SW) method, which approximately maximizes the number of matches, and show that it leads to significant unfairness in recommendation opportunities, illustrating the trade-off between fairness and match rates. To address this challenge, we propose the Nash Social Welfare (NSW) method, which alternately optimizes two NSW functions and achieves nearly envy-free recommendations. We further generalize the SW and NSW method to the $\alpha$-SW method, which balances the trade-off between fairness and high match rates. Additionally, we develop a computationally efficient approximation algorithm for the SW/NSW/$\alpha$-SW methods based on the Sinkhorn algorithm. Through extensive experiments on both synthetic datasets and two real-world datasets, we demonstrate the practical effectiveness of our approach.
\end{abstract}


\begin{CCSXML}
<ccs2012>
   <concept>
       <concept_id>10002951.10003317.10003347.10003350</concept_id>
       <concept_desc>Information systems~Recommender systems</concept_desc>
       <concept_significance>500</concept_significance>
       </concept>
   <concept>
       <concept_id>10002951.10003260.10003282.10003292</concept_id>
       <concept_desc>Information systems~Social networks</concept_desc>
       <concept_significance>300</concept_significance>
       </concept>
   <concept>
       <concept_id>10002951.10003260.10003261.10003270</concept_id>
       <concept_desc>Information systems~Social recommendation</concept_desc>
       <concept_significance>300</concept_significance>
       </concept>
 </ccs2012>
\end{CCSXML}

\ccsdesc[500]{Information systems~Recommender systems}
\ccsdesc[300]{Information systems~Social networks}
\ccsdesc[300]{Information systems~Social recommendation}

\keywords{Two-sided Matching Markets, Reciprocal Recommender Systems (RRSs), Fair Recommendation, Fair Division, Envy-Freeness}

\maketitle

\section{Introduction}\label{section:Introduction}
In recent decades, matching platforms such as online dating services and job recommendation systems have proliferated rapidly across many countries. 
In contrast to conventional recommender systems that suggest items (e.g., movies, music) to users, these platforms employ reciprocal recommender systems (RRSs) where users on both sides receive recommendations and can express mutual interest, creating a fundamentally different recommendation paradigm~\cite{pizzato2010recon,mine2013reciprocal,palomares2021reciprocal}.
Critically, this reciprocal nature means that designing recommendations becomes equivalent to allocating opportunities to be recommended among users on both sides.

A primary objective for matching platforms is maximizing successful matches between users.
To achieve this, it is essential to avoid excessive concentration of recommendation opportunities (i.e., the chances of being recommended to others) on popular users.
This limitation arises since even if a highly popular user is recommended to many users on the other side and receives numerous expressions of interest, the number of connections they can meaningfully engage with is physically limited.
Several recent studies show that practical online dating recommendation algorithms suffer from this popularity bias~\cite{chen2023reducing,celdir2024popularity}. 
To address this issue, some reciprocal recommendation algorithms have been developed to avoid popularity bias and increase the total number of matches~\cite{chen2023reducing,tomita2022matching,Tomita2023,su2022optimizing}. 

In parallel with maximizing matches, ensuring fairness among users represents a crucial consideration for matching platforms.
This is particularly important in RRSs where both users receiving recommendations and those being recommended are individuals who can perceive and react to unfair treatment.
Unfairness among users on both sides can directly affect user satisfaction and the platform's reputation.
Furthermore, fair recommendation distribution may inherently reduce popularity bias, thereby also contributing to the efficiency objective of maximizing matches in real-world applications.

While fairness has emerged as a critical research area in conventional recommender systems~\cite{Li2022Fairness,Wang2023FairnessRecommender,DeldjooJannach2024}---encompassing user-side fairness~\cite{ekstrand2018all,wang2021user}, item-side fairness~\cite{beutel2019fairness,zhu2020measuring,SaitoJoachimsKDD2022}, both-side fairness~\cite{WuCaoXuTan2021},
group fairness~\cite{kearns2018preventing,kearns2019empirical}, and individual fairness~\cite{biega2018equity,singh2018fairness}---studies addressing fairness in RRSs remain limited~\cite{xia2019we,Virginie2021}.
Additionally, the dual role of users on online matching platforms---simultaneously acting as active users receiving and acting on recommendations (``likes'' or applications) and passive users being recommended to others---creates a fundamentally different structure from conventional recommender systems.
This unique characteristic necessitates new definitions and analyses of fairness in RRSs.

To address this challenge, we observe that the RRS design problem can be viewed as allocating limited recommendation opportunities among competing users. This perspective naturally leads us to consider fairness frameworks from resource allocation in economics. Specifically, we adopt the concept of \emph{envy-freeness} from fair division within economics and algorithmic game theory~\cite{Foley1967,Varian1974,Budish2011,SaitoJoachimsKDD2022}\footnote{In fair division, which studies how to allocate divisible or indivisible resources fairly among agents, one agent is said to have \emph{envy} toward another if they find another agent's bundle of resources more desirable than their own bundle, and an allocation is said to be \emph{envy-free} if no agent envies another.}.

We aim to ensure that users do not envy others' recommendation opportunities. For example, consider two users with similar characteristics or popularity levels---if one user receives significantly better recommendations (e.g., appearing frequently at the top of the other side's recommendation lists) and consequently gets more matches, this creates an unfair situation. The disadvantaged user would naturally feel envy toward the favored user's treatment. 
Motivated by these observations, we introduce a novel fairness notion in RRSs based on envy-freeness. Building on \citet{SaitoJoachimsKDD2022}, who defined item-side envy-freeness in conventional recommender systems, we extend this concept to both-side fairness in RRSs.

We further propose a natural formulation for RRSs on online matching platforms. Specifically, we consider a model where users' preference probabilities are given through data-driven methods, and when we provide recommendations, the resulting expected matches can be computed.
Our key research questions are: How can we achieve fair recommendations in this setting? Can we develop fairness-aware algorithms that are competitive with baseline approaches in RRS research? 
Furthermore, can we achieve fairness without significantly sacrificing the number of matches?

\subsection{Contributions}
Our contributions are as follows:

First, we formalize the reciprocal recommender system (RRS) problem for online matching platforms and introduce fairness concepts for recommendation opportunities based on envy-freeness from fair division.
While \citet{SaitoJoachimsKDD2022} defined envy-freeness for the item-side in conventional recommender systems, we extend this concept to both side fairness in RRSs, where users on both sides can experience envy regarding their recommendation opportunities.
We further introduce Pareto optimality as an efficiency concept and the Gini index as an additional fairness metric from economics to assess RRSs. These serve as analytical tools in our experimental evaluation.

Second, we introduce a method to maximize expected matches in our proposed model and prove that match-maximizing recommendations satisfy Pareto efficiency.
Importantly, since this involves maximizing matches between users on both sides, the optimization problem is non-convex. We propose an alternating heuristic optimization approach, which we call the social welfare (SW) method, and experimentally demonstrate its effectiveness in achieving high match rates.
However, through theoretical analysis and numerical experiments, we demonstrate that this approach leads to significant unfairness among users, thereby revealing a trade-off between the total number of matches and fairness.

Third, we propose a Nash social welfare (NSW) method that achieves approximately envy-free recommendations while maintaining competitive match rates as shown in Theorem~\ref{thm:envy-free-policy}.
Building on \cite{SaitoJoachimsKDD2022}, whose approach relied on convex optimization, we develop new techniques to handle the non-convex optimization challenges in RRSs by alternately maximizing two NSW functions using the Frank-Wolfe algorithm. 
Remarkably, our experimental evaluation demonstrates that this method ensures almost zero envy while providing competitive match performance.

Fourth, we propose an optimization framework that enables a parameterized transition between SW and NSW. While SW maximizes matches at the expense of fairness and NSW ensures fairness but may result in slightly fewer matches, our framework introduces a tunable parameter $\alpha$ to balance these objectives, allowing for flexible trade-offs between fairness and efficiency.
We call this the $\alpha$-SW method.

Fifth, we develop a fast algorithm based on the Sinkhorn algorithm~\cite{Sinkhorn1967,Cuturi2013} for both SW and NSW methods. 
Since our recommendation model represents each user's ranking positions in others' recommendation lists using doubly stochastic matrices, the Sinkhorn algorithm---which efficiently optimizes over doubly stochastic constraints---is naturally applicable to our setting.
By incorporating the Sinkhorn algorithm for the NSW method, we enable scalable computation that maintains both fairness guarantees and competitive match performance, even for substantially larger problem instances. Furthermore, the algorithm can be parallelized using GPU acceleration.

Finally, we conduct comprehensive experiments using both synthetic datasets and real-world data from a Japanese online dating platform.
We compare our methods against baseline approaches derived from existing research. 
Our experiments demonstrate that the NSW method achieves statistically significant improvements in fairness while maintaining competitive match performance compared to the SW method. All code is publicly available online\footnote{Code repository: \url{https://github.com/CyberAgentAILab/FairReciprocalRecommendation}}.

\subsection{Paper Structure and Differences from the Conference Version}

This article is an extended version of the conference article published in RecSys'24~\cite{TomitaYokoyama2024}. 
The conference paper includes the following components: the problem formalization for RRSs and fairness concepts based on envy-freeness, presented in Sections~\ref{sec:model},~\ref{sec:social_welfare}, and~\ref{sec:fairness_axioms}. The SW and NSW methods using linear programming (LP) solvers are described in Sections~\ref{sec:SW} and~\ref{sec:NSW}, respectively.
The experimental evaluation includes baseline methods in Section~\ref{sec:experiments:Baseline_Methods}, synthetic data experiments in Section~\ref{sec:experiments:Synthetic_Data_Experiment_I}, and real-world experiments using data from a Japanese online dating platform in Section~\ref{sec:experiments:Real-World_Data_Experiment}.

In this extended journal paper, we make several additional contributions beyond the conference version. 
We introduce Pareto optimality and the Gini index as additional evaluation metrics from economics in Sections~\ref{sec:Pareto} and~\ref{sec:experiments:Gini_index}, respectively. 
Furthermore, in Section~\ref{sec:trade-off-fairness-socialwelfare}, we demonstrate the $\alpha$-SW method that enables a parameterized transition between SW and NSW. 
Since the original conference paper used an LP solver for alternating maximization, large-scale computation was challenging. This extended paper addresses this limitation with a Sinkhorn-based algorithm presented in Section~\ref{sec:Sinkhorn}. 
The experimental evaluation is expanded with new studies: $\alpha$-SW method experiments in Section~\ref{sec:experiments:Synthetic_Data_Experiment_II}, Sinkhorn algorithm validation in Section~\ref{sec:experiments:Sinkhorn}, and an additional real-world dataset from a speed dating experiment in Section~\ref{sec:experiments:Real-World_Data_Experiment}.
Moreover, we include additional experiments in the Appendix: robustness to preference estimation errors in Appendix~\ref{appendix:robustness} and sensitivity to hyperparameters in Appendix~\ref{appendix:sensitivity}.

\section{Further Related Work}

\paragraph{Reciprocal Recommender Systems (RRSs)}
Reciprocal recommender systems (RRSs) have attracted increasing attention, with comprehensive surveys provided by~\citet{palomares2021reciprocal}.
Early research primarily focused on preference estimation between users on opposite sides~\cite{pizzato2010recon,xia2015reciprocal,neve2019aggregation,neve2019latent}.
Some recent works have studied the post-processing phase to compute reciprocal scores of user pairs.
\citet{su2022optimizing} formalize the problem of the post-processing phase in reciprocal recommender systems and propose a new algorithm based on convex programming.
\citet{Tomita2023} study the similar model of \citet{su2022optimizing} and consider the TU matching algorithm from the matching theory in economics, whose application in online dating is also studied by \citet{tomita2022matching} and \citet{chen2023reducing}.
These approaches address popularity concentration to maximize overall matches but do not explicitly consider fairness among individual users.

\paragraph{Fair Reciprocal Recommendation}
Few studies have addressed fairness in reciprocal settings. 
\citet{xia2019we} discuss fairness among groups and similarity of mutual preference in RRSs based on the Walrasian equilibrium, which is an economic concept considering a balance between supply and demand while optimally satisfying the preferences of both parties involved. 
\citet{Virginie2021} consider fairness in RRSs based on Lorenz dominance. They define fair rankings as those with non-dominated generalized Lorenz curves for both users and items.

These models differ significantly from ours in several aspects, as summarized in Table~\ref{tab:comparison_fair_rrs}.
First, our model explicitly requires mutual interest between users for a successful match: a match occurs only when both agents apply to each other.
In contrast, \citet{xia2019we} optimize relevance-based utility without modeling the explicit mutual application process, and \citet{Virginie2021} define two-sided utility as the sum of recommendation utility and exposure without requiring mutual applications.
Second, the fairness concepts differ: we adopt envy-freeness from fair division, while \citet{xia2019we} use Walrasian equilibrium criteria (disparity of service, similarity of mutual preference, and equilibrium of demand and supply) and \citet{Virginie2021} use Lorenz dominance.
Third, \citet{Virginie2021} assume symmetric mutual preferences, whereas our model allows asymmetric preferences.

\begin{table}[t]
\centering
\caption{Comparison of fairness-aware reciprocal recommendation methods.}
\label{tab:comparison_fair_rrs}
\small
\begin{tabular}{lccc}
\toprule
 & Ours & WE-Rec~\cite{xia2019we} & Lorenz~\cite{Virginie2021} \\
\midrule
Match condition & Mutual apply & Not explicit & Not explicit \\
\midrule
Framework & Nash social welfare & Walrasian equilibrium & Welfare maximization \\
\midrule
Fairness criteria & Envy-freeness & Service disparity & Lorenz efficiency \\
 & & Mutual pref. similarity & \\
 & & Demand/supply eq. & \\
\midrule
Utility & Expected matches & Relevance-based & Two-sided utility \\
\midrule
Asymmetric pref. & \checkmark & \checkmark & $\times$ \\
\bottomrule
\end{tabular}
\end{table}

\paragraph{Fair Division in Machine Learning}
Fair division and envy-freeness have gained attention in machine learning, especially classifications and recommendations~\cite{balcan2019envyfree,golz2019paradoxes}.
In the context of binary classification, \citet{balcan2019envyfree} propose envy-freeness as a notion of fairness for classification tasks where individuals have diverse preferences over possible outcomes.
We mentioned several studies on fair recommender systems. Here, we highlight some notable work that has adopted envy-freeness to ensure fairness in recommendation settings.
\citet{patro2020fairrec} explore fairness in product recommendations to customers, considering both the utility of the customer and the exposure of the producer.
They based their fairness metrics on envy-freeness and maximin share (MMS), which is another fairness metric, and proposed fair recommendation algorithms utilizing fair division techniques.
\citet{do2022online} study fairness in user-item recommendations, focusing on envy-freeness among users. They proposed a method for certifying the fairness of the recommender system based on envy-freeness criteria.

\paragraph{Fair Division}
The problem of fair division has been extensively studied in various fields~\cite{steinhaus1948problem,Foley1967,Varian1974,Budish2011}.
It concerns allocating resources to agents in a fair and efficient manner.
In recent years, there has been active research on fair division for indivisible items, i.e., goods can not be shared between agents. In such cases, envy-free allocations may not exist and a relaxed notion of envy-freeness \emph{envy-freeness up to one good (EF1)}, has been studied~\cite{Budish2011,Lipton2004}.
An allocation is EF1 if envies can be eliminated by removing at most one good from the subset of items which an envied agent receives.
Another key concept in fair division is Nash social welfare (NSW), which aims to maximize the product of agents' utilities, striking a balance between efficiency and fairness. 
When goods are divisible and agents have additive valuations, any maximum Nash social welfare (MNW) allocation is both envy-free and \emph{Pareto optimal (PO)}~\cite{Eisenberg1959,Varian1974}, where PO ensures that no other allocation makes an agent better off without making someone else worse off. 
When goods are indivisible and agents' valuations are additive, \citet{Caragiannis2019} showed that any MNW allocation is both EF1 and PO. However, computing MNW is NP-hard~\cite{Nguyen2014}. A substantial body of research has been devoted to developing approximation algorithms aimed at maximizing the NSW~\cite{ColeGkatzelis2018,Garg2023,JainVaish2024}.
We employ these concepts of fairness and efficiency in the context of RRSs.

Recently, some papers have combined fairness concepts in fair division with two-sided matching markets~\cite{freeman2021two,gollapudi2020almost,igarashi2023fair}.
Our reciprocal recommendation problem can be considered to include the many-to-many matching problem when the examination function is limited to uniform and only deterministic recommendation list is considered. 
\citet{freeman2021two} proposed the notion of \emph{double envy-freeness up to one match (DEF1)} in two-sided matching markets, which simultaneously satisfies fairness on both sides.
They show that, when both sides have identical ordinal preferences, a complete matching satisfying DEF1 always exists and can be computed in polynomial time.
Our model assumes that agents apply to other agents with probabilities dependent on the ranking position based on the given recommendations, which makes our setting significantly different from theirs.

\section{Preliminaries}
We consider the recommendation process on two-sided matching platforms, such as those used in online dating services.
Mirroring the approach of the major online dating platforms in the real world, users on one side of the market receive recommendation lists featuring users on the opposite side. 
Based on these recommendations, users choose to apply to which users on the opposite side (or send ``likes''). 
This process is reciprocal---users on both sides receiving recommendations and making applications---leading to a match when there is mutual interest between a pair of users.

Although the primary objective of the platform is to maximize the overall number of matches, we argue that fairness among users raises important concerns.
In this section, we first describe our model of a matching platform and define a notion of fairness inspired by fair division.
For convenience, a summary of the notation used in this paper is provided in Appendix~\ref{appendix:notation}.

\subsection{Model}\label{sec:model}
In our model, the matching market is composed of two distinct sets of \emph{agents}.
For a positive integer $k$, let $[k] = \{1,2,\ldots,k\}$.
We define $N=\{a_1,a_2,\ldots,a_n\}$ as the set of $n$ agents on the left side and $M=\{b_1,b_2,\ldots,b_m\}$ as the set of $m$ agents on the right side.

For each pair of agents $(a_i, b_j)$, where $i\in [n]$ and $j\in [m]$, we introduce the estimated \emph{preference probabilities}.
Let $\hat{p}_1(i,j)\in [0,1]$ denote the estimated preference probability that agent $a_i\in N$ prefers (or would choose) agent $b_j\in M$.
Similarly, let $\hat{p}_2(j,i)\in [0,1]$ denote the estimated preference probability that agent $b_j\in M$ prefers agent $a_i\in N$.
These probabilities are typically derived from data-driven methods and are assumed to be provided in our framework.
Note that the preferences are generally asymmetric, that is, $\hat{p}_1(i,j)$ may differ from $\hat{p}_2(j,i)$ for a given pair of agents.

In the context of RRSs, each agent receives a \emph{recommendation list} of the agents on the opposite side. We employ probabilistic recommendation lists represented as doubly stochastic matrices.

For a positive integer $d$, a $d \times d$ matrix $S \in \mathbb{R}^{d \times d}_{\geq 0}$ is called a \emph{doubly stochastic matrix} if the sum of each row and each column is equal to $1$. 
In our model, each agent $a_i\in N$ receives a doubly stochastic matrix $A_i \in \mathbb{R}^{m\times m}_{\geq 0}$ representing a probabilistic ranking. We refer to $A_i$ as \emph{recommendation} for agent $a_i\in N$. The entry $(j,k)$ of $A_i$, denoted as $A_i(j,k)$, represents the probability that agent $b_j\in M$ is placed in the $k$-th position in the ranking of agent $a_i$.
Similarly, agent $b_j\in M$ receives a doubly stochastic matrix $B_j \in \mathbb{R}^{n\times n}_{\geq 0}$ and we call $B_j$ the recommendation for agent $b_j\in M$. 
We denote $\bm{A}=(A_1,A_2,\ldots,A_n)$ and $\bm{B}=(B_1,B_2,\ldots,B_m)$, and define \emph{recommendation policy} as the pair $(\bm{A},\bm{B})$, which encompasses the recommendations for all agents. 

\subsection{Utilities and Social Welfare}\label{sec:social_welfare}
We first consider recommendation policies that aim to maximize the sum of utilities of all agents on both sides.
Our primary objective is to develop a framework for identifying ``optimal'' recommendation policies given the estimated preference probabilities.

We now explain how agents apply to agents on the opposite side based on the recommendations they receive.
We adopt the position-based model (PBM), a commonly assumed model in recommendation contexts~\cite{craswell2008experimental,Joachims2017-ix}, to determine the probability that an agent applies to another agent.
Let $e(k)$ denote the examination probability, representing the likelihood that an agent will examine the $k$-th candidate in their recommendation list.
The function $e$ is assumed to be non-increasing with respect to $k$, reflecting the observation that agents are less likely to consider candidates lower in the list. 
The specific definition of $e$ depends on the context of the real-world application.
Common examples of examination probability functions in the literature include $e(k) = \frac{\mathbbm{I}(k\leq K)}{k} $ or $\frac{\mathbbm{I}(k\leq K)}{\log_2 (k+1)}$, where $K \geq 0$ represents a threshold number\footnote{Agents will not apply to agents who are beyond the $K$-th position in the ranking, or they may not receive recommendations for such agents. For an event $\mathcal{E}$, $\mathbbm{I}(\mathcal{E})$ is an indicator function such that $\mathbbm{I}(\mathcal{E})=1$ if $\mathcal{E}$ holds and $0$ otherwise.}.
These functions are widely used~\cite{jarvelin2002cumulated,SaitoJoachimsKDD2022}.
Based on PBM, we suppose that
the probability that an agent $a_i \in N$ applies to an agent $b_j\in M$ on the opposite side is given by the product of $a_i$'s estimated preference for $b_j$ and the expected value of $e(k)$ where $k$ is the position of $b_j$ in the recommendation list for $a_i$.
Thus, given recommendation $A_i$, the probability that agent $a_i\in N$ applies to agent $b_j\in M$ is 
$$
    \mathrm{Pr}[\text{$a_i$ applies to $b_j$}] 
    = \hat{p}_1(i,j)\sum_{k=1}^{m} e(k)  \cdot A_i(j,k),
$$  
and similarly, the probability that agent $b_j\in M$ applies to agent $a_i\in N$ is
$$
    \mathrm{Pr}[\text{$b_j$ applies to $a_i$} ] 
    = \hat{p}_2(j,i)\sum_{\ell=1}^{n} e(\ell)  \cdot B_j(i,\ell).
$$

In our model, a match between two agents $a_i\in N$ and $b_j \in M$ is established only when both $a_i$ applies to $b_j$ and $b_j$ applies to $a_i$.
Thus, the probability that the agent $a_i\in N$ matches with the agent $b_j \in M$ is given by
\begin{align*}
    &\mathrm{Pr}[\text{$a_i$ matches with $b_j$} ] 
    = 
        \mathrm{Pr}[\text{$a_i$ applies to $b_j$} ~\land~ \text{$b_j$ applies to $a_i$}] \\
    &= 
        \mathrm{Pr}[\text{$a_i$ applies to $b_j$} ] \cdot \mathrm{Pr}[\text{$b_j$ applies to $a_i$} ] 
    = 
        p_{ij} \sum_{k=1}^m\sum_{\ell=1}^{n} e_{k\ell}  A_{i}(j,k) \cdot B_{j}(i,\ell),
\end{align*}
where $p_{ij} = \hat{p}_1(i,j)\cdot \hat{p}_2(j,i)$ and $e_{k\ell} = e(k)\cdot e(\ell)$ for each $k \in [m]$ and $\ell \in [n]$.

Fig.~\ref{fig:model} illustrates this conceptual model. As shown, agents on both sides receive recommendations in the form of doubly stochastic matrices that encode the probabilities of ranking candidates at various positions.
\begin{figure*}[t]
    \begin{center}
    \includegraphics[width=0.8\linewidth]{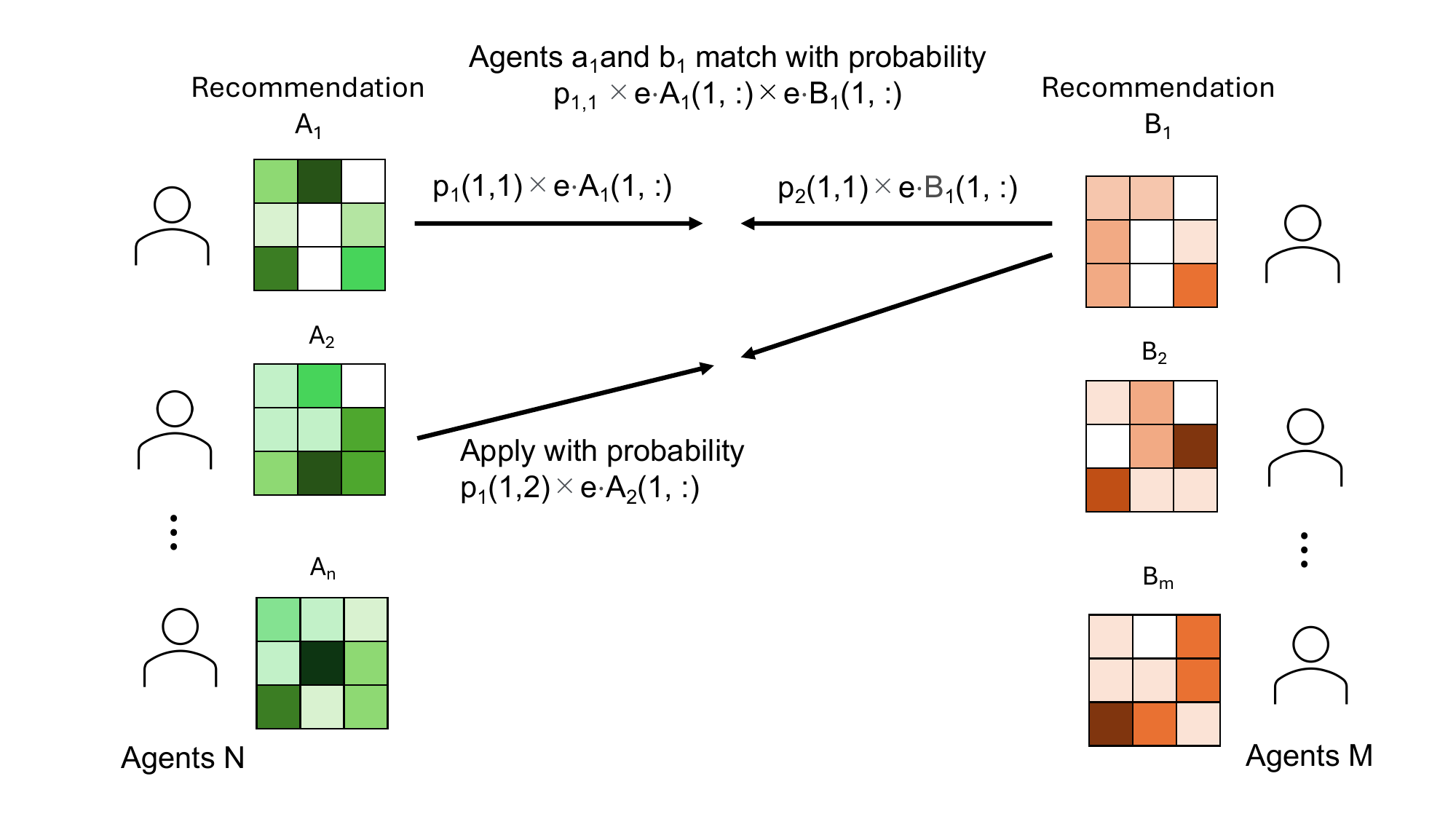}
    \caption{Overview of the recommendation process in two-sided matching platforms. Agents on both sides receive probabilistic recommendation lists---represented by doubly stochastic matrices---and make application decisions based on estimated preference probabilities. A match is established when both agents mutually select each other.}
    \label{fig:model}
    \end{center}
\end{figure*}

In our two-sided recommendation setting, we assume that each agent obtains \emph{utility} defined as the expected number of successful matches with other agents on the opposite side, based on the recommendation policy.
Note that an individual may match with more than one partner.
Specifically, under policy $(\bm{A},\bm{B})$, the utility of agent $a_i\in N$ is given by 
$$
U_{i}(\bm{A},\bm{B}) = \sum_{b\in M} \mathrm{Pr}[\text{$a_i$ matches with $b$}]
= \sum_{j=1}^m
p_{ij} \sum_{k=1}^m\sum_{\ell=1}^{n} e_{k\ell}  A_{i}(j,k)  B_{j}(i,\ell)
.
$$ 
Note that the utility $U_{i}(\bm{A},\bm{B})$ depends only on $A_i$ and $\bm{B}$.
Similarly, the utility of agent $b_j\in M$ under policy $(\bm{A},\bm{B})$ is given by 
$$
V_{j}(\bm{A},\bm{B}) = \sum_{a\in N} \mathrm{Pr}[\text{$b_j$ matches with $a$}]
= \sum_{i=1}^n
p_{ij} \sum_{k=1}^m\sum_{\ell=1}^{n} e_{k\ell}  A_{i}(j,k)  B_{j}(i,\ell)
.
$$

We now define the \emph{social welfare} $\mathrm{SW}(\bm{A},\bm{B})$ for policy $(\bm{A},\bm{B})$ as the expected number of matches between all agents on the two sides, that is,
$$
\mathrm{SW}(\bm{A},\bm{B}) 
    = \sum_{i=1}^n U_i(\bm{A},\bm{B})
    = \sum_{j=1}^m V_j(\bm{A},\bm{B}).
$$
A policy $(\bm{A},\bm{B})$ is said to be \emph{socially optimal} if it maximizes social welfare.

If an agent is aware of the recommendations provided to all agents on the opposite side, the agent would desire the most favorable recommendation to maximize its own utility.
For each agent $a_i \in N$, we say that the recommendation $A_i^*$ is \emph{best} for given $\bm{B}$ if $A_i^* \in \mathrm{argmax}_{A_i}U_i (A_{i},\bm{B})$, under the constraint that $A_i$ is a doubly stochastic matrix.
Similarly, for each agent $b_j\in M$, the recommendation $B_j^*$ is best for given $\bm{A}$ if $B_j^* = \mathrm{argmax}_{B_j}V_j (\bm{A},B_{j})$. 
It is straightforward to verify that if the recommendation policy is socially optimal, then all agents receive their best recommendations.

\subsection{Fairness Axioms}\label{sec:fairness_axioms}
We introduce fairness notions for recommendation policies in our reciprocal recommendation model.

We first define the concept of the \emph{opportunity} to be recommended, inspired by the concept of the \emph{impact} defined by \citet{SaitoJoachimsKDD2022}\footnote{\citet{SaitoJoachimsKDD2022} defined the concept of impacts for items that are recommended to users, and introduce envy-freeness among items with respect to impacts. 
If we map agents in $N$ to users and agents in $M$ to items and fix $\bm{A}$, then the opportunity of agent $j$ in our model coincides precisely with the impact of item $j$.}.
The opportunity for an agent represents all the rows relevant to that agent from every recommendation list provided on the opposite side.
Specifically, the opportunity of the agent $a_i\in N$ is represented by a tuple of
$(B_1(i,:), B_2(i,:), \ldots, B_m(i,:))$
where $B_j(i,:)\in \mathbb{R}^{n}$ is the $i$-th row of the recommendation matrix $B_j$ for each $j \in [m]$. 
Similarly, the opportunity of agent $b_j \in M$ is given by
$(A_1(j,:), A_2(j,:), \ldots, A_n(j,:))$ 
where $A_i(j,:)\in \mathbb{R}^{m}$ is the $j$-th row of the recommendation matrix $A_i$ for each $i \in [n]$.


Next, we define a concept of fairness for the recommendation policy that ensures that the opportunities to be recommended are "fair" for all agents from the perspective of fair division~\cite{Foley1967,Varian1974}.
For example, consider a scenario where users can assess the probability of being recommended to the other side. 
If two users on the same side have identical situations (e.g., similar features or preference levels from the opposite side), but one user is clearly being recommended more favorably to the other side than the other user, the disadvantaged user may feel envious and dissatisfied with the platform.

To capture such states of unfairness, we introduce the concept of \emph{envy-freeness} with respect to opportunity.
An agent $a_i\in N$ is said to \emph{envy} another agent $a_{i'}\in N$ in terms of opportunity if the utility that agent $a_i$ would receive from $a_{i'}$'s opportunity exceeds the utility that $a_i$ receives from their own opportunity.
Let $U_i\bigl(B_1(i',:), B_2(i',:), \ldots, B_m(i',:)\bigr)$ be the utility that agent $a_i$ receives from $a_{i'}$'s opportunity, i.e.,
\[
    U_i\bigl(B_1(i',:), B_2(i',:), \ldots, B_m(i',:)\bigr) = \sum_{j=1}^m \sum_{k=1}^{m} \sum_{\ell=1}^{n} p_{ij} e_{k\ell} A_{i}(j,k)\cdot B_{j}(i',\ell).
\]
Agent $a_i$ envies agent $a_{i'}$ if and only if $U_i\bigl(B_1(i,:), B_2(i,:), \ldots, B_m(i,:)\bigr) < U_i\bigl(B_1(i',:), B_2(i',:), \ldots, B_m(i',:)\bigr)$.

The situation is analogous for the agents in $M$. 
We define the utility that agent $b_j \in M$ receives from agent $b_{j'}$'s opportunity as
\[
    V_j\bigl(A_1(j',:), A_2(j',:), \ldots, A_n(j',:)\bigr) = \sum_{i=1}^n \sum_{k=1}^{m} \sum_{\ell=1}^{n} p_{ij} e_{k\ell} A_{i}(j',k)\cdot B_{j}(i,\ell).
\]
We say that agent $b_j$ envies agent $b_{j'}$ if $V_j\bigl(A_1(j,:), A_2(j,:), \ldots, A_n(j,:)\bigr) 
< V_j\bigl(A_1(j',:), A_2(j',:), \ldots, A_n(j',:)\bigr)$.


A policy is \emph{doubly envy-free} if no agent envies the recommendation opportunities of any other agent on the same side of the market. We formalize this concept as follows:
\begin{definition}[Double envy-freeness]
    Recommendations $B_1,B_2,\ldots,B_n$ is said to be
    \emph{left-side envy-free} if for every pair of agents $a_{i},a_{i'} \in N$,
    $
    U_i(B_1(i,:), B_2(i,:), \ldots, B_m(i,:))
    \ge 
    U_i(B_1(i',:), B_2(i',:), \ldots, B_m(i',:))
    $
    Similarly, recommendations $A_1,A_2,\ldots,A_m$ is said to be \emph{right-side envy-free} if for every pair of agents $b_{j},b_{j'}\in M$, 
    $
    V_j\bigl(A_1(j,:), A_2(j,:), \ldots, A_n(j,:)\bigr) 
    \ge 
    V_j\bigl(A_1(j',:), A_2(j',:), \ldots, A_n(j',:)\bigr).
    $
    A recommendation policy $(\A,\B)$ is called \emph{double envy-free} if its recommendations of both sides simultaneously satisfy envy-freeness on the left side and on the right side.
\end{definition}
It is worth noting that a double envy-free policy always exists in RRSs. 
To illustrate this, consider a scenario in which we assign probabilistic recommendation lists such that all agents have the same probability of appearing at any position on the lists of agents on the opposite side.  Formally, consider the uniform recommendation policy $ (\bm{A}^{\mathrm{uni}},\bm{B}^{\mathrm{uni}})$, where
$A_i^{\mathrm{uni}}(j,k)=\frac{1}{m}$ for all $i, j, k$ and
$B_j^{\mathrm{uni}}(i,\ell)=\frac{1}{n}$ for all $j,i,\ell$.
%
Although this policy ensures double envy-freeness, it may not maximize the number of successful matches.

%
In contrast, maximizing social welfare does not guarantee that the resulting policy will be double envy-free. 
The presence of conflicting preferences among agents can lead to a trade-off between the overall utility and fairness for all participants.
Consider the following example.
\begin{example}\label{example:non-envy-free}
    Consider a simple scenario with two agents and one agent. 
    Thus, $n=2$ and $m=1$.
    Let $\varepsilon\in (0,1)$ be a constant. 
    Assume that the inverse examination function is given by $e(k)=1/k$, and the preference probabilities are given by $\hat{p}_1 = (1, 1)^{\top} \in \mathbb{R}^{2\times 1}$, and $\hat{p}_2 = (1, 1-\varepsilon) \in \mathbb{R}^{1\times 2}$.
    This means that the single agent on $M$ equally prefers both agents on $N$, and that while both agents on $N$ strongly prefer the agent on $M$, the preference for $a_1$ is slightly higher than $a_2$'s.
    
    For this instance, 
    we can see that the policy $A_1=1,A_2=1,B_1=[[1,0],[0,1]]$ is socially optimal.
    Here, the components $A_1 = A_2 = 1$ mean that the single agent in $M$ is recommended to the two agents with probability $1$.
    $B_1 = [[1,0],[0,1]]$ represents the recommendation matrix where $a_1$ is always recommended in the first position to the recommendation list of the agent on $M$, and $a_2$ is always recommended in the second position in the list.
    Although this policy achieves the highest social welfare, it falls in terms of fairness. Specifically, it's not double envy-free since we have $U_2(B_1(2,:)) = U_2([0,1]) = \frac{1-\varepsilon}{2} <  1-\varepsilon = U_2(B_1(1,:))$, and agent $a_2$ will envy agent $a_1$.
    
    On the other hand, the uniform recommendation policy $A^{\mathrm{uni}}_1=1,A^{\mathrm{uni}}_2=1,B^{\mathrm{uni}}_1=[[1/2,1/2],[1/2,1/2]]$ satisfies double envy-freeness. However, this uniform policy is not socially optimal, since $\mathrm{SW}(\bm{A}^{\mathrm{uni}}, \bm{B}^{\mathrm{uni}}) = \frac{3}{4} + \frac{3(1-\varepsilon)}{4} < 1+ \frac{1-\varepsilon}{2} = \mathrm{SW}(\bm{A},\bm{B})$.
\end{example}

\subsection{Relationship to Fair Division}
In this section, we explore the connection between the fairness axioms introduced in our reciprocal recommendation model and the well-established concepts in fair division.
Namely, we present that the problem of finding a left-side envy-free recommendation in a special case is equivalent to the problem of finding an envy-free allocation in fair division for divisible items.
Specifically, when $K=1$ and recommendations $A_1,A_2,\ldots,A_n$ are fixed, 
achieving left-side envy-freeness requires finding opportunities for all agents in $N$, represented as $B_1(i,1), B_2(i,1), \ldots, B_m(i,1)$, such that
$($i$)$ for all pairs of agents $a_{i},a_{i'}\in N$, 
$\sum_{j\in M}p_{ij}A_i(j,1) \cdot B_j(i,1) \geq \sum_{j\in M}p_{ij}A_i(j,1) \cdot B_{j}(i',1)$,
$($ii$)$ $\sum_{i=1}^n B_j(i,1) = 1$ for each $j \in [m]$,
and $($iii$)$ $B_j(i,1) \ge 0$ for all $i\in [n], j\in [m]$.
In this case, $B_1(i,1), B_2(i,1), \ldots, B_m(i,1)$ does indeed represent an envy-free allocation of left-side agents to right-side agents, where each agent $a_i\in N$ has a utility of $p_{ij}A_i(j,1)$ for each agent $b_j\in M$.

\subsection{Pareto Optimality of Recommendation Opportunity}\label{sec:Pareto}
While envy-freeness captures fairness among agents, it does not guarantee that the recommendation policy is efficient.
For example, the uniform recommendation policy is always double envy-free, but it may fail to utilize recommendation opportunities effectively.
To assess efficiency, we introduce the concept of Pareto optimality.

In economics, an allocation is Pareto optimal if no agent can be made better off without making another agent worse off.
In our RRS model, the opportunity to be recommended can be viewed as a resource to be allocated among agents.
Intuitively, a recommendation policy is Pareto optimal if the platform cannot increase the expected matches that one agent obtains from recommendations without decreasing those of another agent.

We formally define Pareto optimality for recommendations in our model as follows:
\begin{definition}[Pareto optimality]
    Recommendations $\bm{B}=(B_1,\ldots,B_n)$ is \emph{Pareto dominated} for the left side given $\bm{A}$ by the other $\bm{B}'=(B_1',\ldots,B_m')$ if $U_i(A_i, \bm{B}') \ge U_i(A_i, \bm{B})$ for all agents on the left side $a_i \in N$, with strict inequality for at least one agent.
    Furthermore, $\bm{B}$ is \emph{Pareto optimal} for the left side given $\bm{A}$ if there exists no other $\bm{B}'$ that Pareto dominates for the left side.
    The definition extends analogously to the right side.
    We say that a recommendation policy $(\bm{A}, \bm{B})$ is \emph{mutually Pareto optimal} if $\bm{A}$ is Pareto optimal given $\bm{B}$ and $\bm{B}$ is Pareto optimal given $\bm{A}$.
\end{definition}

In Example~\ref{example:non-envy-free}, we demonstrate that while socially optimal recommendations may not be double envy-free, the uniform recommendation policy---which suggests agents uniformly to the other side (i.e., $A_{i}(j, k) = 1/m$ and $B_{j}(i, \ell) = 1/n$ for all $i,j,k,\ell$)---is double envy-free.
However, as the following example illustrates, the uniform recommendation policy may not be Pareto optimal.

\begin{example}
Consider a scenario with two agents on the left side $N = \{a_1, a_2\}$, one agent on the right side $M = \{b_1\}$, and $e(1) = 1, e(2) = 0$.
The preference probabilities are $\hat{p}_{1}(1,1) = 1, \hat{p}_{1}(2,1) = 0, \hat{p}_2(1,1) = \hat{p}_2(1,2) = 1$.
In this setting, only $a_1$ wants to match $b_1$ while $a_2$ does not, and the agent on the right side $b_1$ is indifferent between the two agents on the left side.
Under the uniform recommendation policy $(\bm{A}^\mathrm{uni}, \bm{B}^\mathrm{uni})$, which recommends both agents on the left side with the same probability to $b_1$, we have $U_{1}(A_1^\mathrm{uni}, \bm{B}^\mathrm{uni}) = 1/2$ and $U_{2}(A_2^\mathrm{uni}, \bm{B}^\mathrm{uni}) = 0$.
Let $\bm{B}' = [[1,0],[0,1]]$, which recommends $a_1$ to $b_1$ in the first position with probability 1, then the utilities for the agents on the left side are $U_{1}(A_1^{\mathrm{uni}}, \bm{B}') = 1$ and $U_{2}(A_2^{\mathrm{uni}}, \bm{B}') = 0$.
Therefore, $\bm{B}^{\mathrm{uni}}$ is Pareto dominated by $\bm{B}'$ for the left side, demonstrating that the uniform recommendation policy is not Pareto optimal in this case.
\end{example}

Pareto optimality is closely related to social welfare maximization defined in Section~\ref{sec:social_welfare}. The following proposition shows that any socially optimal policy is also Pareto optimal.

\begin{proposition} 
    If a policy is socially optimal, then it is mutually Pareto optimal.
\end{proposition}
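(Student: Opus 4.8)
The plan is to prove the contrapositive for each side separately: if $(\bm{A},\bm{B})$ is \emph{not} mutually Pareto optimal, then it is not socially optimal. By symmetry it suffices to handle the left side. So suppose $\bm{B}$ is Pareto dominated given $\bm{A}$ by some $\bm{B}'$, meaning $U_i(A_i,\bm{B}') \ge U_i(A_i,\bm{B})$ for all $a_i \in N$, with strict inequality for at least one agent.

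The key observation is that social welfare decomposes as $\mathrm{SW}(\bm{A},\bm{B}) = \sum_{i=1}^n U_i(\bm{A},\bm{B})$, and each $U_i$ depends only on $A_i$ and $\bm{B}$ (as noted in Section~\ref{sec:social_welfare}). Summing the Pareto-domination inequalities over all $i \in [n]$ gives
\[
    \mathrm{SW}(\bm{A},\bm{B}') = \sum_{i=1}^n U_i(A_i,\bm{B}') > \sum_{i=1}^n U_i(A_i,\bm{B}) = \mathrm{SW}(\bm{A},\bm{B}),
\]
where strictness comes from the one agent with strict improvement. I should also note that $\bm{B}'$ is a valid recommendation (a tuple of doubly stochastic matrices), which is part of the definition of Pareto domination, so $(\bm{A},\bm{B}')$ is a feasible policy. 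Hence $(\bm{A},\bm{B})$ does not maximize social welfare, contradicting social optimality.

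The right-side case is entirely analogous, using $\mathrm{SW}(\bm{A},\bm{B}) = \sum_{j=1}^m V_j(\bm{A},\bm{B})$ and the fact that $V_j$ depends only on $B_j$ and $\bm{A}$: if $\bm{A}$ were Pareto dominated given $\bm{B}$ by some $\bm{A}'$, summing over $j$ would yield $\mathrm{SW}(\bm{A}',\bm{B}) > \mathrm{SW}(\bm{A},\bm{B})$, again contradicting optimality. Therefore a socially optimal policy must be Pareto optimal on both sides, i.e., mutually Pareto optimal.

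There is no real obstacle here — the proof is essentially the standard ``utilitarian optimum is Pareto efficient'' argument, and the only thing to be careful about is confirming that the utility functions separate additively over agents in the right way (which the model's definitions guarantee) and that the dominating profile is itself feasible (which is built into the definition of Pareto domination in this paper). The one-line subtlety worth stating explicitly is why strict improvement for a \emph{single} agent suffices to make the summed inequality strict: because all the other terms satisfy weak inequality in the same direction.
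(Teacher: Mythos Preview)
Your proof is correct and follows essentially the same approach as the paper: assume a Pareto-dominating $\bm{B}'$ exists, sum the per-agent utility inequalities to obtain $\mathrm{SW}(\bm{A},\bm{B}') > \mathrm{SW}(\bm{A},\bm{B})$, and derive a contradiction with social optimality, then invoke symmetry for the right side. The only difference is cosmetic---you frame it as the contrapositive and add explicit remarks on feasibility and the source of strictness, which the paper leaves implicit.
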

\begin{proof}
    Let $(A^*, B^*)$ be a socially optimal policy.
    Suppose for contradiction that \(B^*\) is not Pareto optimal for the left side given \(A^*\).
    Then, there exists another $B'$ such that $U_{i}(A^*, B') > U_{i}(A^*, B^*)$ for some $a_i \in N$, and $U_{i'}(A^*, B') \ge U_{i'}(A^*, B^*)$ for all $a_{i'} \in N$.
    Thus,
    \begin{align*}
        \mathrm{SW}(A^*, B^*) 
        = U_{i}(A^*, B^*) 
        + \sum_{i'\in [n]\setminus\{i\}}U_{i'}(A^*,B^*) 
        <  U_{i}(A^*, B') 
        + \sum_{i'\in [n]\setminus\{i\}}U_{i'}(A^*,B') 
        = \mathrm{SW}(A^*, B'),
    \end{align*}
    which contradicts the assumption that $(A^*, B^*)$ is socially optimal.
    The proof for the right side is analogous.
\end{proof}

\section{Computing Socially Optimal Policy}\label{sec:SW}

In this section, we propose a reciprocal recommendation algorithm which computes a socially optimal policy approximately.

By utilizing $n^2m+nm^2$ variables $\{A_{i}(j,k)\}_{i,j,k}$ and $\{B_{j}(i,\ell)\}_{j,i,\ell}$,
our optimization problem of finding a socially optimal policy 
is to maximize $\mathrm{SW}(\bm{A},\bm{B})$ under the constraint that $A_i$ and $B_j$ are doubly stochastic matrices for all $i\in [n]$ and all $j\in [m]$.
Since the optimization objective is quadratic, computing the global optimal solution is challenging.

Although the function $\mathrm{SW}(\bm{A},\bm{B})$ is not necessarily concave with respect to $(\bm{A},\bm{B})$, it is readily apparent that fixing $\bm{A}$ makes $\mathrm{SW}$ a linear function of $\bm{B}$ and vice versa.
As a naive approach, one might naturally conceive of an alternating maximization method: first optimizing over $\bm{A}$ while keeping $\bm{B}$ fixed and then optimizing over $\bm{B}$ with $\bm{A}$ fixed.
This procedure essentially computes the best recommendations for each side alternately at each step. If the process converges to some values $(\bm{A}^*,\bm{B}^*)=(A_1^*,A_2^*,\ldots,A_n^*,B_1^*,B_2^*,\ldots,B_m^*)$, then $A_i^* \in \mathrm{argmax}_{A_i}U_i (A_{i},\bm{B}^*)$ and
$B_j^* \in \mathrm{argmax}_{B_j}V_j (\bm{A}^*,B_{j})$
for all $i\in N$ and for all $j\in M$.
From the linearity of the max function, we also have 
$\bm{A}^* \in \mathrm{argmax}_{\bm{A}}\;\mathrm{SW} (\bm{A},\bm{B}^*)$
and 
$\bm{B}^* \in \mathrm{argmax}_{\bm{B}}\;\mathrm{SW} (\bm{A}^*,\bm{B})$.


We now present a general algorithm for alternating optimization, as described in Algorithm~\ref{alg:alternating_SW_maximization}, which employs the Frank-Wolfe algorithm to solve the convex optimization problems~\cite{Frank1956,Su2022}. 
This framework will also be utilized in subsequent sections. 
Consider two real-valued functions, $F_1$ and $F_2$, which take matrices $\bm{A}$ and $\bm{B}$ as input. 
We assume that the function $F_1(\bm{A},\bm{B})$ is concave with respect to $\bm{B}$, and $F_2(\bm{A},\bm{B})$ is concave with respect to $\bm{A}$. 
Consequently, when $\bm{A}$ (resp. $\bm{B}$) is fixed, $F_1(\bm{A},\bm{B})$ (resp. $F_1(\bm{A},\bm{B})$) can be maximized with respect to $\bm{B}$ (resp. $\bm{A}$) using the Frank-Wolfe algorithm.
Algorithm~\ref{alg:alternating_SW_maximization} begins by initializing doubly stochastic matrices $\bm{A}$ and $\bm{B}$ in line~\ref{line:initialize_A_SW}. 
The alternating optimization process is then executed within the \textbf{for} loop (lines~\ref{line:loop_start_A_SW}), continuing until convergence is achieved or a predefined number of iterations is completed.

\begin{algorithm}[t]
    \caption{Alternating Maximization of Functions $F_1(\bm{A},\bm{B})$ and $F_2(\bm{A},\bm{B})$  via Frank-Wolfe Algorithm}
    \label{alg:alternating_SW_maximization}
    \begin{algorithmic}[1]
        \REQUIRE Preference probabilities $(p_{ij})_{i,j}$, an examination function $e(\cdot)$, and learning rates $(\eta_t)_{t\in [T]}$.
        \ENSURE Recommendations $\bm{A},\bm{B}$.
        \STATE Initialize $\bm{A},\bm{B}$.\label{line:initialize_A_SW}
        \FOR{$t=1,2,\ldots,T$} \label{line:loop_start_A_SW}
            \STATE 
                $\bm{X}^* \in \text{argmax}_{\bm{X} = (X_{i})_{i\in N}} \sum_{i=1}^n \sum_{j,k=1}^m \left(\nabla_{A_i} F_2(\bm{A},\bm{B})\right)_{j,k} \cdot \left(X_{i} \right)_{j,k}$ s.t. $X_{i}\boldsymbol{1} =\boldsymbol{1}$, $\boldsymbol{1}^{\top}X_{i} = \boldsymbol{1}^{\top}$, and $X_{i} \in \mathbb{R}^{m\times m}_{\geq 0}$ $\forall i\in [n]$.\label{line:maximize_A_SW}
            \STATE 
                $\bm{A} \leftarrow (1-\eta_t)\bm{A}  + \eta_t \bm{X}^*$
            \STATE 
                $\bm{Y}^* \in \text{argmax}_{\bm{Y} = (Y_{j})_{j\in M}} \sum_{j=1}^m \sum_{i,\ell=1}^n \left(\nabla_{B_j} F_1(\bm{A},\bm{B}) \right)_{i,\ell} \cdot \left( Y_j \right)_{i,\ell} $ s.t. $Y_j\boldsymbol{1} =\boldsymbol{1}$, $\boldsymbol{1}^{\top}Y_j = \boldsymbol{1}^{\top}$, and $Y_j \in \mathbb{R}^{n\times n}_{\geq 0}$ $\forall j\in [m]$.\label{line:maximize_B_SW}
            \STATE 
                $\bm{B} \leftarrow (1-\eta_t)\bm{B}  + \eta_t \bm{Y}^*$
        \ENDFOR \label{line:while_end_SW}
        \RETURN $\bm{A},\bm{B}$.
    \end{algorithmic}
\end{algorithm}

To apply this framework to alternate maximization of social welfare, we set $F_1(\bm{A},\bm{B})=F_2(\bm{A},\bm{B})=\mathrm{SW}(\bm{A},\bm{B})$.
If the values of $\bm{A}$ and $\bm{B}$ converge in the alternating algorithm, we obtain a policy ensuring that all agents receive their best recommendations.
Although this procedure may be anticipated to converge to a local optimum or fail to converge, the experiments described in Section~\ref{sec:experiments} demonstrate convergence and the ability to achieve high social welfare values.
For each $i\in [n]$, $j \in [m]$, $k\in [m]$, and $\ell\in [n]$,
the gradient calculations are as follows
\begin{align*}
    \left( \nabla_{A_i} \mathrm{SW}(\A,\B)\right)_{j,k} =   \frac{\partial}{\partial A_i(j,k)} \mathrm{SW}(\A,\B) 
    = p_{ij}  \sum_{\ell=1}^{n} e_{k\ell} B_{j}(i,\ell),
\end{align*}
and
\begin{align*}
    \left( \nabla_{B_j} \mathrm{SW}(\A,\B)\right)_{i,\ell} 
    = \frac{\partial}{\partial B_j(i,\ell)} \mathrm{SW}(\A,\B) 
    =  p_{ij}  \sum_{k=1}^{m} e_{k\ell} A_i(j,k).
\end{align*}

Although a socially optimal policy $(A^*, B^*)$ is not necessarily envy-free, as shown in Example~\ref{example:non-envy-free}, it is guaranteed that the policy is mutually Pareto optimal.

\section{Computing Fair Recommendation}
\label{sec:NSW}

As illustrated in Example~\ref{example:non-envy-free}, socially optimal policies do not always guarantee double envy-freeness.
This observation is further supported by the experimental results in Section~\ref{sec:experiments}, which reveal that the policies derived from Algorithm~\ref{alg:alternating_SW_maximization} and other algorithms designed to maximize social welfare can result in significant levels of envy across both sides of the market.
We present a novel approach that aims to achieve fairness in the opportunities to be recommended among agents in RRSs. 

\subsection{Nash Social Welfare}

We propose a method that aims to compute a double envy-free policy using the \emph{Nash social welfare (NSW)} functions~\cite{Eisenberg1959,Varian1974,Kroer2019}. 
This approach is an extension of the work by Saito and Joachims~\cite{SaitoJoachimsKDD2022}, who considered the problem of recommending items to users in a one-sided recommendation setting. 
We adapt their technique to the reciprocal recommendation problem, where both sides of the market receive recommendations.

We introduce two NSW-style functions.
For policy $(\bm{A},\bm{B})$, the \emph{left-side NSW} function $\mathrm{NSW}_1(\bm{A}, \bm{B})$ is defined as the product of the expected utilities that each agent in $N$ receives under the policy. Formally,
\begin{align*}
    \mathrm{NSW}_1(\bm{A}, \bm{B}) 
    = \prod_{i=1}^n U_i(\bm{A}, \bm{B}) 
    = \prod_{i=1}^n \sum_{j=1}^m p_{ij}  \sum_{k=1}^{m} \sum_{\ell=1}^{n} e_{k\ell} A_{i}(j,k) B_{j}(i,\ell).
\end{align*}
Similarly, the \emph{right-side NSW} function $\mathrm{NSW}_2(\bm{A}, \bm{B})$ for policy $(\bm{A},\bm{B})$ is given by
\begin{align*}
    \mathrm{NSW}_2(\bm{A},\bm{B}) 
    = \prod_{j=1}^m V_j(\bm{A}, \bm{B})
    = \prod_{j=1}^m \sum_{i=1}^n p_{ij}  \sum_{k=1}^{m} \sum_{\ell=1}^{n} e_{k\ell} A_{i}(j,k) B_{j}(i,\ell).
\end{align*}

Since NSW functions on the left and right sides are defined as products of individual utilities, maximizing these functions naturally favors balanced allocations for each side.
%
We say that $\bm{B}^*=(B_1^*,B_2^*,\ldots,B_m^*)$ is \emph{NSW-best} for the left side given $\bm{A}$ if $\bm{B}^*$ maximizes $\mathrm{NSW}_1(\bm{A}, \bm{B})$.
Similarly, $\bm{A}^*=(A_1^*,A_2^*,\ldots,A_n^*)$ is \emph{NSW-best} for the right side given $\bm{B}$ if $\bm{A}^*$ maximizes $\mathrm{NSW}_2 (\bm{A}, \bm{B})$.
We say that $(\bm{A}^*,\bm{B}^*)$ is mutually \emph{NSW-best} if both of $\bm{A}^*$ and $\bm{B}^*$ are NSW-best for the other.

Analogously to the SW-best policy, the NSW-best policy is Pareto optimal.
\begin{proposition}
    Suppose that $(\A^*, \B^*)$ is mutually NSW-best and all agents obtain positive utilities, i.e., $U_{i}(\A^*,\B^*)>0~\forall i \in [n]$ and $V_{j}(\A^*,\B^*)>0~\forall j \in [m]$. Then $(A^*, B^*)$ is mutually Pareto optimal.
\end{proposition}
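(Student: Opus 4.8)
The plan is to mimic the proof of the earlier proposition (socially optimal implies mutually Pareto optimal), replacing the additive social welfare argument with a multiplicative one. Suppose for contradiction that $(\bm{A}^*, \bm{B}^*)$ is mutually NSW-best but $\bm{B}^*$ is \emph{not} Pareto optimal for the left side given $\bm{A}^*$. Then there exists $\bm{B}'$ with $U_i(\bm{A}^*, \bm{B}') \ge U_i(\bm{A}^*, \bm{B}^*)$ for all $i \in [n]$, with strict inequality for at least one index $i_0$. The goal is to conclude $\mathrm{NSW}_1(\bm{A}^*, \bm{B}') > \mathrm{NSW}_1(\bm{A}^*, \bm{B}^*)$, contradicting NSW-bestness of $\bm{B}^*$ on the left side.

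The key step is the monotonicity of the product. Since $\mathrm{NSW}_1(\bm{A}^*, \bm{B}) = \prod_{i=1}^n U_i(\bm{A}^*, \bm{B})$ and, by hypothesis, every factor $U_i(\bm{A}^*, \bm{B}^*)$ is strictly positive, I would argue factor-by-factor: each factor weakly increases when moving from $\bm{B}^*$ to $\bm{B}'$, the factor indexed by $i_0$ strictly increases, and all factors on the $\bm{B}'$ side are still positive (they are at least as large as the corresponding positive factors at $\bm{B}^*$). A product of nonnegative reals in which every factor weakly increases, at least one strictly increases, and the original product is strictly positive, strictly increases; this is where the positivity assumption $U_i(\bm{A}^*,\bm{B}^*) > 0$ is essential, since otherwise a zero factor could mask the improvement. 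This yields $\mathrm{NSW}_1(\bm{A}^*, \bm{B}') > \mathrm{NSW}_1(\bm{A}^*, \bm{B}^*)$, the desired contradiction. The argument for the right side — that $\bm{A}^*$ is Pareto optimal given $\bm{B}^*$ — is entirely symmetric, using $\mathrm{NSW}_2$ and the positivity of the $V_j(\bm{A}^*, \bm{B}^*)$.

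I do not anticipate a genuine obstacle here; the proof is essentially a one-line observation about monotonicity of products of positive numbers, and the structure closely parallels the already-established proposition for social welfare. The only point requiring a little care is to state explicitly why positivity of all utilities is needed and to note that feasibility of $\bm{B}'$ (doubly stochastic constraints) is inherited from the definition of Pareto domination, so that $\bm{B}'$ is a legitimate competitor in the maximization defining NSW-bestness. I would write this up in three or four sentences mirroring the earlier proof, with the single displayed inequality chain
\[
\mathrm{NSW}_1(\bm{A}^*, \bm{B}^*) = \prod_{i=1}^n U_i(\bm{A}^*, \bm{B}^*) < \prod_{i=1}^n U_i(\bm{A}^*, \bm{B}') = \mathrm{NSW}_1(\bm{A}^*, \bm{B}'),
\]
and then remark that the right-side case is analogous.
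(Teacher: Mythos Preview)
Your proposal is correct and matches the paper's proof essentially line for line: both argue by contradiction, take a Pareto-dominating $\bm{B}'$, and use positivity of all $U_i(\bm{A}^*,\bm{B}^*)$ to conclude $\mathrm{NSW}_1(\bm{A}^*,\bm{B}') > \mathrm{NSW}_1(\bm{A}^*,\bm{B}^*)$, then appeal to symmetry for the right side. Your write-up is in fact slightly more careful than the paper's in flagging exactly where the positivity hypothesis is used.
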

\begin{proof}
    By symmetry, it suffices to prove for left-side agents. 
    Suppose, for contradiction, that $\B^*$ is not Pareto optimal for the left side given $\A^*$.
    Then there exists $\B'$ such that $U_{i}(\A^*, \B') > U_{i}(\A^*, \B^*)$ for some agent $a_i \in N$ and $U_{i'}(\A^*, \B') \ge U_{i'}(\A^*, \B^*)$ for all $a_{i'} \in N$. This implies
    \begin{align*}
        \mathrm{NSW}_1(\A^*, \B^*) 
        = 
            U_{i}(\A^*, \B^*) \cdot \prod_{i'\in [n]\setminus\{i\}}U_{i'}(\A^*,\B^*) 
        <  
            U_{i}(\A^*, \B') \cdot \prod_{i'\in [n]\setminus\{i\}}U_{i'}(\A^*,\B') 
        = 
            \mathrm{NSW}_1(\A^*, \B'),
    \end{align*}
    which contradicts the optimality of $\B^*$ in maximizing \(\mathrm{NSW}_1\).
\end{proof}

Theorem 4.1 of \citet{SaitoJoachimsKDD2022} and Theorem 1 of \citet{Kroer2019} immediately imply that if two sets of recommendations simultaneously satisfy NSW-best under a certain condition, then the policy is approximately double envy-free.

To elucidate this, we introduce a definition pertaining to the preferences of agents. 
The values of preference probabilities $(p_{ij})_{i,j}$ satisfies $\varepsilon$-\emph{similarity} for a nonnegative constant $\varepsilon\geq 0$ if for every left-side agent $a_i\in N$, there exists a set of the left-side agents $N'\subseteq N$ of size at least $K+1$ where each left-side agent $a_{i'}\in N'$ satisfies $\max_{j\in [m]}|p_{ij} - p_{i'j}|\leq \varepsilon$, and, simultaneously, for every right-side agent $b_j\in M$, there exists a subset $M'\subseteq M$ of size at least $K+1$ where each right-side agent $b_{j'}\in M'$ satisfies $\max_{i\in [n]}|p_{ij} - p_{ij'}|\leq \varepsilon$.
When $n$ and $m$ are substantially larger than $K$, the assumption that there is at most a constant number $K+1$ of similar individuals for each agent in the market does not impose an excessively strict constraint.
Then, we present the following theorem.
\begin{theorem}\label{thm:envy-free-policy}
    For the case of $K=1$, if a policy induces NSW-best recommendations for both sides, then the policy is doubly envy-free.
    %
    For the case of $K>1$, letting $\varepsilon$ be a non-negative constant, we assume that the estimated preference probabilities satisfies $\varepsilon$-similarity, and $n=\Theta(m)$\footnote{I.e., there exist two positive constants $c_1$ and $c_2$ such that $c_1\cdot m \leq n \leq c_2\cdot m$.}. If a policy $(\bm{A},\bm{B})$ induces NSW-best recommendations for both sides,
    then the policy satisfies double envy-freeness up to an additive difference of $\mathrm{O}(\varepsilon)$, that is, for every pair of the left-side agents $a_{i},a_{i'}\in N$,
    $$
    U_i(B_1(i,:), B_2(i,:), \ldots, B_m(i,:))
    \ge 
    U_i(B_1(i',:), B_2(i',:), \ldots, B_m(i',:)) - \mathrm{O}(\varepsilon),
    $$
    and
    for every pair of the right-side agents $b_{j},b_{j'}\in M$,
    $$
    V_j\bigl(A_1(j,:), A_2(j,:), \ldots, A_n(j,:)\bigr) 
    \ge 
    V_j\bigl(A_1(j',:), A_2(j',:), \ldots, A_n(j',:)\bigr)- \mathrm{O}(\varepsilon).
    $$
\end{theorem}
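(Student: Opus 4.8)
The plan is to reduce the statement to the one-sided envy-freeness results of \citet{SaitoJoachimsKDD2022} and \citet{Kroer2019}, applied separately to each side of the market. First I would observe that, since $U_i(\bm{A},\bm{B})$ depends on $\bm{B}$ only through the rows $(B_j(i,:))_j$ and linearly in those rows, fixing $\bm{A}$ turns the left-side problem into exactly the one-sided recommendation model of \citet{SaitoJoachimsKDD2022}: map each left-side agent $a_i$ to a ``user'' whose value for being placed at position $\ell$ in the list of ``item'' $b_j$ is $p_{ij}\sum_{k} e_{k\ell} A_i(j,k)$, and identify each left-side agent's opportunity with the impact of the corresponding item. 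Under this identification, $\mathrm{NSW}_1(\bm{A},\bm{B})$ is precisely the Nash welfare objective in their setting, so a $\bm{B}$ that is NSW-best given $\bm{A}$ is an impact-maximizing (MNW) allocation in their model. Their Theorem 4.1 (resp. Theorem 1 of \citet{Kroer2019}) then yields exact envy-freeness when $K=1$ and envy-freeness up to an additive $\mathrm{O}(\varepsilon)$ term when $K>1$, provided the ``similarity'' hypothesis of that theorem holds. The symmetric argument with $\bm{B}$ fixed handles the right side, and combining the two gives double envy-freeness (up to $\mathrm{O}(\varepsilon)$).

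The key steps, in order: (i) state the reduction precisely, checking that the doubly-stochastic constraints on $A_i$ and the row-stochastic structure of the $B_j(i,:)$'s match the constraint set in \citet{SaitoJoachimsKDD2022}, and that positivity of utilities (needed for NSW to be well-behaved) is available or can be assumed; (ii) verify that our $\varepsilon$-similarity assumption on $(p_{ij})_{i,j}$ translates into the similarity-of-valuations hypothesis required by their theorem — this is where one must be careful, since the induced valuations $p_{ij}\sum_k e_{k\ell}A_i(j,k)$ involve the fixed $\bm{A}$, so closeness of $p_{ij}$ to $p_{i'j}$ across $j$ must propagate to closeness of the induced valuations; here the bound $|p_{ij}-p_{i'j}|\le \varepsilon$ and $e(k)\le e(1)$ give $|v_i(j,\ell) - v_{i'}(j,\ell)|\le \varepsilon\, e(1)^2$ uniformly, so the induced model inherits $\mathrm{O}(\varepsilon)$-similarity; (iii) invoke their quantitative envy-freeness bound, tracking how the additive error scales — this is where the hypothesis $n=\Theta(m)$ enters, presumably to control a normalization factor (e.g. the error in their bound scales with the ratio of the number of users to the number of items, or with the per-agent utility magnitudes, which the $\Theta$-relation keeps bounded); (iv) repeat verbatim for the right side with the roles of $N$ and $M$, $\bm{A}$ and $\bm{B}$ interchanged, and conclude by taking the union of the two guarantees.

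The main obstacle I anticipate is step (ii)–(iii): making the dependence of the additive $\mathrm{O}(\varepsilon)$ term on the problem parameters explicit and confirming that it is genuinely $\mathrm{O}(\varepsilon)$ with the constant depending only on $e(1)$, $K$, and the constants $c_1,c_2$ from $n=\Theta(m)$ — and \emph{not} growing with $n$ or $m$. In particular one must check that, in the reduction, the ``size at least $K+1$'' requirement on the similar sets $N'$ (resp. $M'$) is exactly what \citet{SaitoJoachimsKDD2022}'s theorem consumes, and that the envy bound there does not pick up a hidden factor of $n$ or $m$ once we renormalize utilities; the $n=\Theta(m)$ hypothesis is the natural lever for absorbing any such factor, and I would spell out exactly where it is used. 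The $K=1$ case should fall out as the clean special case where the similar set can be taken to be a single other agent and the additive slack vanishes, matching the first sentence of the theorem.
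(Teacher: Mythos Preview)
Your proposal is correct and follows the same approach as the paper: the paper does not give a self-contained proof of this theorem but simply states that it ``immediately'' follows from Theorem~4.1 of \citet{SaitoJoachimsKDD2022} and Theorem~1 of \citet{Kroer2019}, having already observed (in the ``Relationship to Fair Division'' subsection) that fixing $\bm{A}$ reduces the left-side problem to their one-sided impact/fair-division model. Your outline is thus more detailed than what the paper actually provides; in particular, the concerns you raise in steps (ii)--(iii) about propagating $\varepsilon$-similarity through the induced valuations and about where $n=\Theta(m)$ is consumed are exactly the points the paper leaves implicit by deferring to the cited results.
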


\subsection{Alternate maximization of NSW}
\label{sec:alternate-maximization-of-NSW}

We now discuss a method for finding NSW-best recommendations for both sides. We employ the alternating optimization technique described in Algorithm~\ref{alg:alternating_SW_maximization}.

When $\bm{A}$ is fixed, $\log \mathrm{NSW}_2(\bm{A},\bm{B})$ is a concave function over $\bm{B}$ since it is a sum of logarithms of sums. Similarly, when $\bm{B}$ is fixed, $\log \mathrm{NSW}_1(\bm{A},\bm{B})$ is a concave function over $\bm{A}$.
Therefore, we can apply the alternating maximization algorithm
by setting $F_1(\bm{A},\bm{B}) = \log \mathrm{NSW}_1(\bm{A},\bm{B})$ and $F_2(\bm{A},\bm{B}) = \log \mathrm{NSW}_2(\bm{A},\bm{B})$ in Algorithm~\ref{alg:alternating_SW_maximization}.
The gradient calculations are given by 
\begin{align*}
    \left( \nabla_{A_i} \log \mathrm{NSW}_2(\A,\B)\right)_{j,k} 
    = \frac{\partial}{\partial A_i(j,k)} \log \mathrm{NSW}_2(\bm{A},\bm{B}) 
    = \frac{1}{V_j(\pi)} p_{ij}  \sum_{\ell=1}^{n} e_{k\ell} B_{j}(i,\ell),
\end{align*}
and
\begin{align*}
    \left( \nabla_{B_i} \log \mathrm{NSW}_1(\A,\B)\right)_{i,\ell} 
    = \frac{\partial}{\partial B_j(i,\ell)} \log \mathrm{NSW}_1(\bm{A},\bm{B}) 
    = \frac{1}{U_i(\pi)} p_{ij}  \sum_{k=1}^{m} e_{k\ell} A_i(j,k).
\end{align*}

In the alternating maximization algorithm,
if $(\bm{A},\bm{B})$ converges to some value
$(\bm{A}^*,\bm{B}^*)$, then $\bm{A}^*$ and $\bm{B}^*$ are mutually NSW-best. From Theorem~\ref{thm:envy-free-policy}, we obtain a recommendation policy that satisfies double envy-freeness up to an additive difference of $\mathrm{O}(\varepsilon)$.
The experimental results presented in Section~\ref{sec:experiments} provide strong empirical evidence for the effectiveness of our approach. For detailed results, refer to Sections~\ref{sec:experiments:Synthetic_Data:results} and~\ref{sec:experiments:Real_World_Data:results}.

\section{Balancing the Trade-off between Fairness and Social Welfare}\label{sec:trade-off-fairness-socialwelfare}
The SW method maximizes the total number of matches, and the NSW method achieves fairness among agents. In practice, platform operators may want to balance these two objectives depending on their specific goals.
For example, a platform might accept a small amount of envy in exchange for significantly more matches.

To enable such flexibility, we propose the \emph{$\alpha$-social welfare ($\alpha$-SW)} function and $\alpha$-SW best recommendation, which interpolates between SW and NSW. This approach provides a flexible framework for balancing social welfare and fairness through a single tunable parameter $\alpha$.

For given $\alpha \in (0, 1]$, we define the $\alpha$-social welfare function for the left-side agents as $\mathrm{W}_1^{\alpha}(\A,\B) = \frac{1}{\alpha}\sum_{i=1}^n U_i(\A,\B)^{\alpha}$, and that for the right-side agents as $\mathrm{W}_2^{\alpha}(\A,\B) = \frac{1}{\alpha}\sum_{j=1}^m V_j(\A,\B)^{\alpha}$.
These $\alpha$-SW functions are continuous in $\alpha$ and correspond to social welfare ($\mathrm{SW}$) when $\alpha = 1$.
Intuitively, when $\alpha < 1$, the term $U_i(\A,\B)^\alpha$ grows sublinearly in utility, which means that agents with lower utilities contribute relatively more to the objective function. As a result, the optimization favors more balanced utility distributions. The smaller the $\alpha$, the stronger this equalizing effect becomes.

To analyze the limiting behavior as $\alpha \to 0$, we take the logarithm of $W_1^\alpha(\A, \B)$:
\begin{align*}
    \lim_{\alpha \to 0} \log W_1^\alpha(\A, \B) 
    &= 
        \lim_{\alpha \to 0} \frac{\log\left(\sum_{i=1}^n U_i(\A, \B)^\alpha\right)}{\alpha} 
    = 
        \lim_{\alpha \to 0} \frac{\sum_{i=1}^n U_i(\A, \B)^{\alpha}\log U_i(\A, \B)}{\sum_{i=1}^nU_i(\A, \B)^\alpha} 
    = 
        \frac{\sum_{i=1}^{n}\log U_i(\A, \B)}{n} \\
    &= 
        \frac{\log \mathrm{NSW}_1(\A, \B)}{n},
\end{align*}
where the second equality holds by L'H\^{o}pital's rule.
Since the logarithm is strictly increasing, maximizing $W_1^\alpha$ in the limit $\alpha \to 0$ is equivalent to maximizing $\mathrm{NSW}_1$. A similar argument applies to $W_2^\alpha$.

For given recommendations $\A$, a recommendation $\B^*$ is said to be \emph{$\alpha$-SW best} for the left side if $B^* \in \mathrm{argmax}_{\B} W_1^\alpha(\A, \B)$, and for given $B$, a recommendation $A^*$ is said to be \emph{$\alpha$-SW best} for the right side if $A^* \in \mathrm{argmax}_{\A} W_2^{\alpha}(\A, \B)$.

To compute mutually $\alpha$-SW best recommendations, we utilize alternating maximization through the Frank-Wolfe algorithm (Algorithm~\ref{alg:alternating_SW_maximization}) with $F_1(A, B) = W_1^\alpha(A, B)$ and $F_2(A, B) = W_2^\alpha(A, B)$. The gradient calculations are given by 
\begin{align*}
    \left( \nabla_{A_i}  \mathrm{W}_2^{\alpha}(\A,\B)\right)_{j,k} = \frac{\partial}{\partial A_i(j,k)}  \mathrm{W}_2^{\alpha}(\bm{A},\bm{B})
    = V_j(\A,\B)^{\alpha-1}  \frac{\partial V_j(\A,\B)}{\partial A_i(j,k)}  
    =  V_j(\A,\B)^{\alpha-1}  p_{ij}  \sum_{\ell=1}^{n} e_{k\ell} B_{j}(i,\ell),
\end{align*}
and 
\begin{align*}
    \left( \nabla_{B_j}  \mathrm{W}_1^{\alpha}(\A,\B)\right)_{i,\ell} = \frac{\partial}{\partial B_j(i,\ell)}  \mathrm{W}_1^{\alpha}(\A,\B)
    = U_i(\A,\B)^{\alpha-1}  \frac{\partial U_i(\A,\B)}{\partial B_j(i,\ell)}  
    = U_i(\A,\B)^{\alpha-1}  p_{ij}  \sum_{k=1}^{m} e_{k\ell} A_i(j,k).
\end{align*}

A mutually $\alpha$-SW best recommendation policy balances the trade-off between fairness and social welfare, and a larger $\alpha$ prioritizes social welfare while a smaller $\alpha$ prioritizes the fairness of recommendation opportunities.
The results of experiments for $\alpha$-SW methods are reported in Section~\ref{sec:experiments:Synthetic_Data_Experiment_II}.

\section{Faster Implementation using Sinkhorn Algorithm}\label{sec:Sinkhorn}
In Algorithm~\ref{alg:alternating_SW_maximization}, solving two linear programming (LP) subproblems over doubly stochastic matrices (lines~\ref{line:maximize_A_SW} and~\ref{line:maximize_B_SW}) becomes computationally prohibitive for large-scale problems due to the $O(nm^2 + n^2m)$ variable scaling. To address this challenge, we propose a computationally efficient approximation using the Sinkhorn algorithm~\cite{Sinkhorn1967}.

Each subproblem involves maximizing a linear objective over doubly stochastic matrices.
Since each doubly stochastic matrix is an independent variable, these optimizations can be performed separately. 
For instance, optimizing $\A$ requires solving $n$ independent problems:
$$
    \max_{X_i \in \mathbb{R}_{\geq 0}^{m \times m}} \sum_{j,k=1}^m \left(\nabla_{A_i} F_2(\mathbf{A},\mathbf{B})\right)_{j,k} \cdot \left( X_i\right)_{j,k} \quad \text{s.t.}\quad X_i \mathbf{1} = \mathbf{1}, \mathbf{1}^{\top} X_i = \mathbf{1}^{\top}.
$$ 
This is equivalent to an assignment problem which aims to minimize $\sum_{j,k=1}^m \left(C_i \right)_{j,k} \cdot \left( X_i\right)_{j,k}$ subject to $X$ being a doubly stochastic matrix, after transforming the cost matrix as $C_i =\max_{j,k\in [m]} \left(\nabla_{A_i} F_2(\bm{A},\bm{B}) \right)_{j,k} \cdot \boldsymbol{1} \boldsymbol{1}^{\top} - \nabla_{A_i} F_2(\bm{A},\bm{B})$, ensuring non-negative costs compatible with minimization.
Similarly, we solve $m$ independent assignment problems for optimizing $\mathbf{B}$.


\citet{Cuturi2013} showed that the Sinkhorn algorithm can be used to approximately solve the assignment problem with improved computational efficiency by introducing an entropic regularization term. 
Specifically, we introduce entropic regularization:
$$
    \min_{X \in \mathbb{R}_{\geq 0}^{m \times m}}  \left( \sum_{i,j=1}^m C_{ij} \cdot X_{ij} + \frac{1}{\tau} \sum_{i,j=1}^m X_{ij} \log X_{ij} \right) \quad \text{s.t.}\quad X_i \mathbf{1} = \mathbf{1}, \mathbf{1}^{\top} X_i = \mathbf{1}^{\top},
$$
where $\tau>0$ controls the trade-off between approximation fidelity and numerical stability. The larger $\tau$ weakens the regularization, approaching the original LP, but risks instability.

The dual problem of this entropy-regularized optimization can be solved using an iterative update procedure, known as the Sinkhorn algorithm, which we present in Algorithm~\ref{alg:Sinkhorn}.
The algorithm first elementwisely exponentiates the cost matrix to obtain $K = \exp(-\tau C)$, and iteratively applies the normalization of rows and columns (lines~\ref{line:alg:Sinkhorn:for_start}-\ref{line:alg:Sinkhorn:for_end}) until convergence to an approximate doubly stochastic matrix $X = \operatorname{diag}(u) \, K \, \operatorname{diag}(v)$. 
This matrix scaling approach circumvents expensive LP solvers, achieving a per-iteration complexity of $O(m^2)$ with highly parallelizable matrix operations suitable for GPU acceleration.

\begin{algorithm}[t]
    \caption{\textsc{Sinkhorn Algorithm}~\cite{Cuturi2013}}
    \label{alg:Sinkhorn}
    \begin{algorithmic}[1]
        \REQUIRE A cost matrix $C \in \mathbb{R}_{\ge 0}^{m \times m}$, and a hyper-parameter $\tau > 0$.
        \ENSURE An approximate doubly stochastic matrix $X \in \mathbb{R}_{\ge 0}^{m \times m}$.
        \STATE \(K \gets \exp(-\tau C)\)  (elementwise exponentiation) 
        \STATE Initialize vectors \(u \gets \mathbf{1} \in \mathbb{R}^m\) and \(v \gets \mathbf{1} \in \mathbb{R}^m\).
        \FOR{$t=1,2,\ldots,T$}\label{line:alg:Sinkhorn:for_start}
            \STATE 
                \(u \gets \mathbf{1} \oslash (K v)\), where \(\oslash\) denotes elementwise division.
            \STATE 
                \(v \gets \mathbf{1} \oslash (K^\top u)\)
        \ENDFOR\label{line:alg:Sinkhorn:for_end}
        \STATE $X \gets \operatorname{diag}(u) \, K \, \operatorname{diag}(v)$
        \RETURN $X$
    \end{algorithmic}
\end{algorithm}

Algorithm~\ref{alg:alternating_Sinkhorn} integrates the Sinkhorn algorithm into the alternating maximization framework of Algorithm~\ref{alg:alternating_SW_maximization}.
Crucially, the $n$ subproblems for $\A$ and $m$ for $\B$ can be batched, allowing for parallel computation. 
Lines \ref{line:alg:alternating_Sinkhorn:for_start_A}-\ref{line:alg:alternating_Sinkhorn:for_end_A} and \ref{line:alg:alternating_Sinkhorn:for_start_B}-\ref{line:alg:alternating_Sinkhorn:for_end_B} replace LP solvers with Sinkhorn steps.
Although the use of entropic regularization introduces some approximation error, our empirical results (Section~\ref{sec:experiments:Sinkhorn}) demonstrate that the approach produces competitive performance. Specifically, it scales effectively to larger problem sizes---where traditional LP solvers struggle---while achieving lower envy and higher social welfare. More details are provided in Section~\ref{sec:experiments:Sinkhorn}.

\begin{algorithm}[t]
    \caption{Alternating Maximization via Frank-Wolfe Algorithm using the Sinkhorn Algorithm}
    \label{alg:alternating_Sinkhorn}
    \begin{algorithmic}[1]
        \REQUIRE Preference probabilities $(p_{ij})_{i,j}$, an examination function $e(\cdot)$, and learning rates $(\eta_t)_{t\in [T]}$.
        \ENSURE Recommendations $\bm{A},\bm{B}$.
        \STATE Initialize $\bm{A},\bm{B}$.
        \FOR{$t=1,2,\ldots,T$}
            \FOR{$i=1,2,\ldots,n$} \label{line:alg:alternating_Sinkhorn:for_start_A}
                \STATE 
                    $C_i \gets \max_{j,k\in [m]} \left(\nabla_{A_i} F_2(\bm{A},\bm{B}) \right)_{j,k} \cdot \boldsymbol{1} \boldsymbol{1}^{\top} - \nabla_{A_i} F_2(\bm{A},\bm{B})$
                \STATE 
                    $X^*_i \gets \textsc{Sinkhorn Algorithm}(C_i)$
            \ENDFOR \label{line:alg:alternating_Sinkhorn:for_end_A}
            \STATE 
                $\bm{A} \leftarrow (1-\eta_t)\bm{A}  + \eta_t \bm{X}^*$, where $\bm{X}^* = (X^*_1,\ldots, X^*_n)$.
            \FOR{$j=1,2,\ldots,m$} \label{line:alg:alternating_Sinkhorn:for_start_B}
                \STATE 
                    $C_j \gets \max_{i,\ell\in [n]} \left(\nabla_{B_j} F_1(\bm{A},\bm{B}) \right)_{j,k} \cdot \boldsymbol{1} \boldsymbol{1}^{\top} - \nabla_{B_j} F_1(\bm{A},\bm{B})$
                \STATE 
                    $Y^*_j \gets \textsc{Sinkhorn Algorithm}(C_j)$
            \ENDFOR \label{line:alg:alternating_Sinkhorn:for_end_B}
            \STATE 
                $\bm{B} \leftarrow (1-\eta_t)\bm{B}  + \eta_t \bm{Y}^*$, where $\bm{Y}^* = (Y^*_1,\ldots, Y^*_m)$.
        \ENDFOR
        \RETURN $\bm{A},\bm{B}$.
    \end{algorithmic}
\end{algorithm}

\section{Experiments}\label{sec:experiments}
In this section, we evaluate our proposed methods through experiments using synthetic data and real-world data from a Japanese online dating platform.

\subsection{Preliminaries}\label{sec:experiments:Preliminaries}

Before conducting experiments, we introduce the baseline methods and provide the Gini index used in our evaluation.

\subsubsection{Baseline Methods}\label{sec:experiments:Baseline_Methods}

In our experiments, we compare our proposed approaches with several baseline methods: Naive, Prod, TU, and IterLP. 
The Naive and Prod methods represent conventional approaches widely used in reciprocal recommender systems~\cite{neve2019latent,neve2019aggregation}.
The TU matching method was proposed by \citet{Tomita2023} as an approach to reciprocal recommendation in matching markets where the roles of the two sides differ; this method can also be adapted to our model.
Finally, IterLP is a greedy approach that we propose inspired by the similarity between our model and maximum matching problems.
The details of each method are described as follows.

\paragraph{Naive} We recommend right-side agents to each left-side agent $a_i$ in the non-increasing order of the estimated preferences $\hat{p}_1(i, :)$ deterministically.
The same procedure is applied to recommend agents on the left side to each agent on the right side.

\paragraph{Prod} We first compute the reciprocal score for each pair $(i, j)$ as $p_{ij} = \hat{p}_1(i, j) \cdot \hat{p}_2(j, i)$, and then recommend right-side agents to each left-side agent $a_i$ in non-increasing order of these scores.
The same procedure is used to recommend left-side agents to each right-side agent.

\paragraph{TU (A Heuristic Method Based on TU Matching)}
This method is based on the transferable utility (TU) model, which has been extensively studied in economics~\cite{shapley1971assignment,choo2006who}. 
Recently, \citet{Tomita2023} proposed a deterministic recommendation policy using this framework to increase the expected number of matches while reducing popularity bias in reciprocal recommender systems.
Although their work focused on reciprocal recommendations with slightly different assumptions about the roles of the two sides from ours, their underlying TU method can be adapted to our setting. 
Algorithm~\ref{alg:TU} details this approach, which computes a matching score $\mu_{ij}$ for each $(i,j)$ pair.
These scores are then used to generate recommendations: for each left-side agent $a_i$, right-side agents are recommended in descending order of $\mu_{ij}$, with an analogous process for right-side agents.

\begin{algorithm}[t]
    \caption{A Heuristic Method Based on TU Matching (TU)~\cite{Tomita2023}}
    \label{alg:TU}
    \begin{algorithmic}[1]
        \renewcommand{\algorithmicrequire}{\textbf{Input:}}
        \renewcommand{\algorithmicensure}{\textbf{Output:}}
        \REQUIRE Preference probabilities $(\hat{p}_1(i,j))_{i,j},~(\hat{p}_2(j,i))_{j,i}$, a scale parameter $\beta > 0$, and timesteps $T$.
        \ENSURE Matching scores $(\mu_{ij})_{i, j}$
        \STATE Initialize $X_i= 1,Y_j=1$ for all $i\in N,j\in M$.
        \FOR{$t = 1, \dots, T$}
        \STATE $X_i \leftarrow \sqrt{1 + \left(\frac{1}{2} \sum_{j = 1}^M \exp\left(\frac{\hat{p}_1(i,j)+\hat{p}_2(j, i)}{2\beta}\right)Y_j\right)^2} - \frac{1}{2} \sum_{j \in M} \exp\left(\frac{\hat{p}_1(i,j)+\hat{p}_2(j, i)}{2\beta}\right)Y_j$ for each $i \in N$
        \STATE $Y_j \leftarrow \sqrt{1 + \left(\frac{1}{2} \sum_{i\in N} \exp\left(\frac{\hat{p}_1(i,j)+\hat{p}_2(j, i)}{2\beta}\right)X_i\right)^2} - \frac{1}{2} \sum_{i \in N} \exp\left(\frac{\hat{p}_1(i,j)+\hat{p}_2(j, i)}{2\beta}\right)X_i$ for each $j \in M$
        \ENDFOR
        \STATE Let $\mu_{ij} = \exp\left(\frac{\hat{p}_1(i,j)+\hat{p}_2(j, i)}{2\beta}\right)X_i Y_j$ for each $(i, j) \in N \times M$.
        \RETURN $\mu$.
    \end{algorithmic}
\end{algorithm}

\paragraph{IterLP (A Heuristic Method Based on the Matching Problem)}
We note the connection between the matchings on a bipartite graph and the reciprocal recommendations for both sides and propose a method based on the maximum weight matchings.
In fact, a matching on the complete bipartite graph between $N$ and $M$ with edge-weights $(p_{ij})_{i,j}$ provides recommendations. 
Here, for some $k \in [K]$, $A_{i}(j,k) = 1$ and $B_{j}(i,k) = 1$ if and only if the edge $\{i,j\}$ is included in the matching.
It is worth noting that, in the special case where $K = 1$, $n = m$ and only deterministic policies are considered, a policy induced from a maximum weight matching is socially optimal among deterministic policies. Furthermore, the policy satisfies double envy-freeness since if agent $a_i\in N$ matches with agent $b_j\in M$, then we have $B_j(i',1)=0$ for $i'\neq i$ and another agent $a_{i'} \in N$ never envies agent $a_{i}$. A similar discussion holds for agents in $M$. 

We propose an algorithm based on maximum weight matchings (Algorithm~\ref{alg:IterLP}) that iteratively solves a linear programming (LP) relaxation problem for maximum weight matching, filling up each position $k$ of recommendations. At each iteration, the algorithm recommends the partner most preferred for each agent based on the solution of the LP. 


\begin{algorithm}[t]
    \caption{Iterating Linear Programming of Maximum Weight Matching (IterLP)}
    \label{alg:IterLP}
    \begin{algorithmic}[1]
        \renewcommand{\algorithmicrequire}{\textbf{Input:}}
        \renewcommand{\algorithmicensure}{\textbf{Output:}}
        \REQUIRE Preference probabilities $(p_{ij})_{i,j}$.
        \ENSURE Recommendations $\bm{A},\bm{B}$.
        \STATE Initialize $A_i=O,B_j=O$ for all $i\in N,j\in M$.
        \FOR{$k=1,2,\ldots,K$} 
        \STATE Let $(x_{i,j}^*)_{i,j}$ be an LP solution of the maximum weight matching between $N$ and $M$ with edge-weights $(p_{ij})_{i,j}$.
        \STATE For each $i\in N$, let $j^*_i = \max_{j\in M} x_{i,j}^*$, and set $A_i(j^*_i,k)=1$.
        \STATE For each $j\in M$, let $i^*_j = \max_{i\in N} x_{i,j}^*$, and set $B_j(i^*_j ,k)=1$.
        \STATE Set the weights of $\{i,j^*_i\}$ and $\{i^*_j,j\}$ to $-\infty$.
        \ENDFOR \label{line:for_end}
        \RETURN $\bm{A},\bm{B}$.
    \end{algorithmic}
\end{algorithm}

\subsubsection{The Gini Index}\label{sec:experiments:Gini_index}

While we focus mainly on the fairness of recommended opportunities among users, we also consider the fairness in terms of distribution of resulting utilities (the expected number of matches that one user gets).
One of the important measures to consider the fairness of utility distributions is the Gini index~\cite{gini1936measure}.
The Gini index is commonly used to measure inequalities in income distributions in economics~\cite{sen1997economic}.
Given a recommendation policy $(\A, \B)$, we calculate the utility of each left-side agent $u_i = U_i(\A, \B)$.
The Gini index for left-side agents is computed by 
\[
    G_1 = \frac{\sum_{i=1}^n\sum_{j=1}^n |u_i - u_j|}{2n\sum_{i=1}^n u_i}.
\]
The Gini index is twice as large as an area between the $45$ degree line and the Lorenz curve of utilities.
The larger the Gini index becomes, the more unequal the distribution of utilities.
$G_1 = 0$ means perfect equality (where all agents on the left side get the same level of utility), while $G_1 = 1$ means perfect inequality (where only one left-side agent gets a positive utility).
The Gini index for the right side $G_2$ can be computed similarly.

\subsection{Synthetic Data Experiment I: Comparison with Different Popularity Levels}
\label{sec:experiments:Synthetic_Data_Experiment_I}

In this section, we evaluate our methods on a synthetic data experiment that simulates an online dating platform.

\subsubsection{Data generation}
We first explain the procedure used to generate the synthetic data.
We set the number of agents on the right side as $m = 50$, and consider two cases for the number of agents on the left side $n$: the balanced case with $n = 50$ and the unbalanced case with $n = 75$.
To control the level of popularity differences among agents, we generate preference probabilities with two terms and a parameter $\lambda$, following \citet{su2022optimizing} and~\citet{Tomita2023}:
$$
    \hat{p}_1(i,j) = \lambda \cdot \hat{p}_{1}^{\mathrm{pop}}(j) + (1-\lambda)\cdot \hat{p}_{1}^{\mathrm{unif}}(i,j)
$$
and 
$$
    \hat{p}_2(j,i) = \lambda \cdot \hat{p}_{2}^{\mathrm{pop}}(i) + (1-\lambda)\cdot \hat{p}_{2}^{\mathrm{unif}}(j,i),
$$
where $\hat{p}_{1}^{\mathrm{pop}}(j) = \frac{j-1}{m-1}$ and $\hat{p}_{2}^{\mathrm{pop}}(i) = \frac{i-1}{n-1}$ represent the overall popularity of  agent $b_j$ and agent $a_i$, respectively. 
In contrast, let $\hat{p}_{1}^{\mathrm{unif}}(i,j)$ and $\hat{p}_{2}^{\mathrm{unif}}(j,i)$ represent individual preferences drawn independently from the uniform distribution on $[0,1]$.
The parameter $\lambda$ controls the level of differences in popularity, and we consider $\lambda \in \{0.0, 0.2, 0.4, 0.6, 0.8, 1.0\}$. 
When $\lambda$ is large, agents on each side tend to have very similar preferences towards agents on the opposite side, leading to significant popularity differences.
Finally, we test two examination functions: $e(k) = \frac{1}{\log_2(k+1)}$ (denoted as ``log'') or $e(k) = \frac{1}{k}$ (denoted as ``inv'').

\subsubsection{Procedure}
For each case of $n$, $\lambda$, and the examination function, we generate 10 sample sets of preference probabilities, where random seeds are set to $0,1,\dots,9$.\footnote{For details of preference generation process, see the source code \url{https://github.com/CyberAgentAILab/FairReciprocalRecommendation/blob/main/src/experiments.py}.}
For each sample, we compare the following recommendation methods:
\begin{itemize}
    \item \textbf{SW}: Alternating social welfare maximization (Algorithm~\ref{alg:alternating_SW_maximization} with $F_1(\bm{A},\bm{B})=F_2(\bm{A},\bm{B})=\mathrm{SW}(\bm{A},\bm{B})$),
    \item \textbf{NSW}: Alternating Nash social welfare maximization (Algorithm~\ref{alg:alternating_SW_maximization} with $F_1(\bm{A},\bm{B})=\log \mathrm{NSW}_1(\bm{A},\bm{B})$ and $F_2(\bm{A},\bm{B})=\log \mathrm{NSW}_2(\bm{A},\bm{B})$), and
    \item \textbf{Baseline methods}: Naive, Prod, IterLP, and TU (as presented in Section~\ref{sec:experiments:Baseline_Methods}).
\end{itemize}
In the SW, NSW, and IterLP methods, we used CVXPY~\cite{diamond2016cvxpy} with the ECOS solver to solve the inner linear programming problems.
For the SW and NSW methods, we set the learning rate $\eta = 0.1$ and the number of iterations $T = 100$, ensuring that the updating variables $A$ and $B$ converge before termination in all cases\footnote{We determined that the updates are converged when the maximum update difference of expected matches is less than 0.01.}.
For each sample, we computed (i) the expected number of matches, (ii) the number of pairs $(i, i')$ of left-side agents for which $a_i$ envies $a_{i'}$, and (iii) the number of pairs $(j, j')$ of right-side agents for which $b_j$ envies $b_{j'}$.

\subsubsection{Results}\label{sec:experiments:Synthetic_Data:results}

\begin{figure}[t]
    \begin{minipage}[t]{\linewidth}
        \centering
        \includegraphics[width=\linewidth]{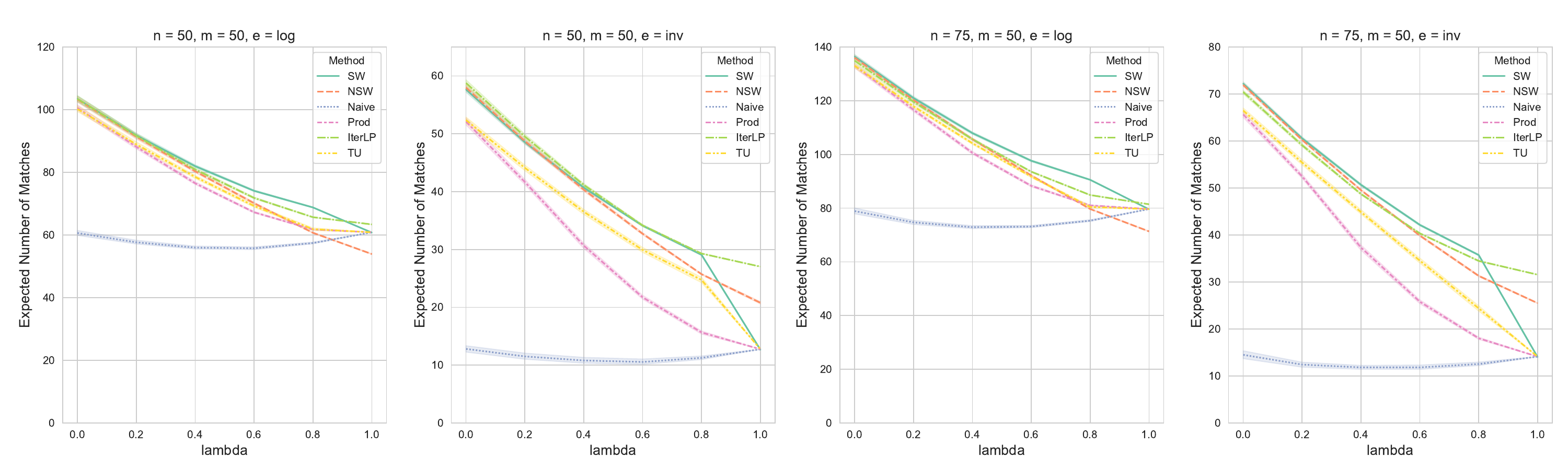}
        \subcaption{Expected number of matches.}
    \end{minipage}
    \begin{minipage}[t]{\linewidth}
        \centering
        \includegraphics[width=\linewidth]{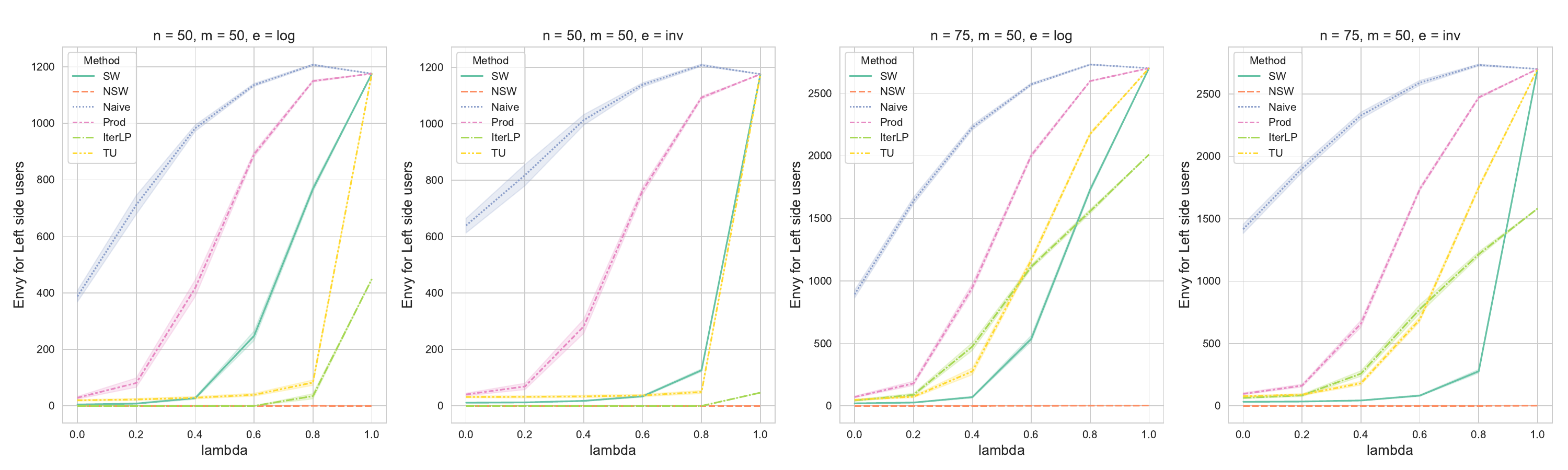}
        \subcaption{Envy for left-side agents.}
    \end{minipage}
    \begin{minipage}[t]{\linewidth}
        \centering
        \includegraphics[width=\linewidth]{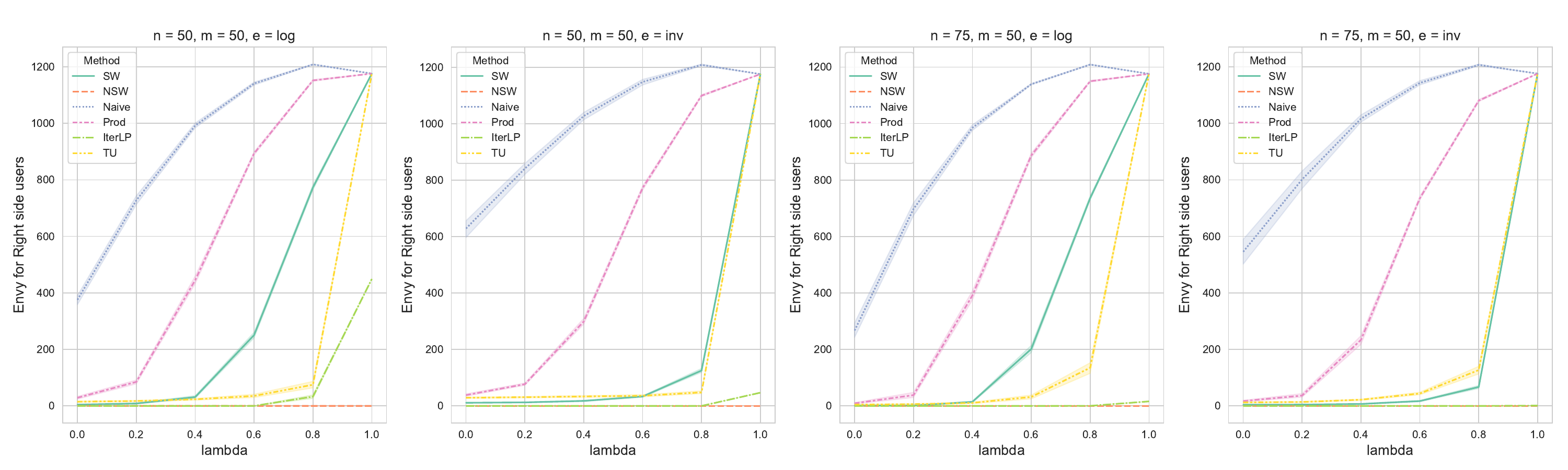}
        \subcaption{Envy for right-side agents.}
    \end{minipage}
    \caption{Results of experiments on synthetic data.
    Upper, middle and lower rows show the expected number of matches, the number of envies for left-side agents, and that for right-side agents for each case, respectively.
    We vary the number of left-side agents $n \in \{50, 75\}$, the examination functions $e(k) = 1/\log_2(k+1)$ (log) or $1/k$ (inv), and lambda $\lambda \in \{0.0, 0.2, \dots, 1.0\}$, while we fix the number of right-side agents $m = 50$, where each case correspond to each column.
    We conducted 10 experiments for each case, and report the mean of results and its 95\% confidence intervals (but it is invisible in many cases due to its small variations.).}
    \label{fig:synthetic_data}
\end{figure}

\begin{table}[t]
\caption{
Comparison of different methods with varying $\lambda$ on synthetic datasets with $n=75$, $m=50$, and $e = \log$. 
Results are averaged over 10 independent trials. 
The table reports the mean and standard deviation of three metrics: the expected number of matches (Expected Matches), the number of envious pairs among left-side agents (Left Envy), and among right-side agents (Right Envy).
As shown in the table, NSW maintains low envy levels among left-side agents  even in high popularity-difference scenarios, with envy measures of $0.70 \pm 0.82$, $2.10 \pm 1.37$, and $2.50 \pm 0.52$ for $\lambda = 0.6$, $0.8$, and $1.0$, respectively.
We conducted the Wilcoxon signed-rank test comparing Left/Right Envy of the NSW method with those of SW ($^*p < 0.05,~ ^{**}p < 0.01,~ ^{***}p < 0.001$).
}
\label{tb:syenthetic-res-75-50-log}
\centering
\begin{tabular}{llrrr}
\toprule
 $\lambda$ & Method & Expected Matches & Left Envy &  Right Envy \\
\midrule
\multirow[t]{6}{*}{0.0} 
 & Naive & 78.9\plmi 1.75 & 892.9\plmi 54.0 & 267.2\plmi 41.4 \\
 & Prod & 132.9\plmi 1.33 & 71.2\plmi 11.6 & 9.30\plmi 4.21 \\
 & TU & 133.0\plmi 1.32 & 48.6\plmi 7.76 & 4.50\plmi 1.90 \\
 & IterLP & 134.9\plmi 1.46 &  43.4\plmi 10.8 &  0.00\plmi 0.00 \\
 & SW & 136.4\plmi 1.27 & 20.5\plmi 3.59 & 0.90\plmi 0.99 \\
 & NSW & 136.0\plmi 1.40 & 0.10\plmi 0.31*** & 0.00\plmi 0.00* \\
\cline{1-5}
\multirow[t]{6}{*}{0.2} 
 & Naive & 74.70\plmi 1.24 & 1639.1\plmi 48.7 & 697.4\plmi 40.3 \\
 & Prod & 116.6\plmi 1.07 & 179.7\plmi 28.4 & 38.2\plmi 16.1 \\
 & TU & 117.8\plmi 0.93 & 73.8\plmi 13.8 & 6.20\plmi 2.74 \\
 & IterLP & 119.6\plmi 1.07 & 88.4\plmi 19.3 & 0.00\plmi 0.00 \\
 & SW & 121.0\plmi 1.05 & 26.8\plmi 3.61 & 1.00\plmi 0.94 \\
 & NSW & 120.2\plmi 1.05 & 0.10\plmi 0.31*** & 0.00\plmi 0.00** \\
\cline{1-5}
\multirow[t]{6}{*}{0.4} 
 & Naive & 72.9\plmi 0.91 & 2227.4\plmi 35.7 & 985.0\plmi 16.9 \\
 & Prod & 100.5\plmi 0.86 & 947.9\plmi 48.0 & 391.0\plmi 27.3 \\
 & TU & 104.1\plmi 0.69 & 276.4\plmi 45.0 & 10.7\plmi 4.21 \\
 & IterLP & 105.6\plmi 0.70 & 474.9\plmi 55.2 & 0.00\plmi 0.00 \\
 & SW & 107.9\plmi 0.79 & 70.0\plmi 11.7 & 14.4\plmi 4.24 \\
 & NSW & 105.6\plmi 0.71 & 0.20\plmi 0.42*** & 0.00\plmi 0.00*** \\
\cline{1-5}
\multirow[t]{6}{*}{0.6}
 & Naive & 73.1\plmi 0.67 & 2571.3\plmi 17.1 & 1139.3\plmi 3.59 \\
 & Prod & 88.2\plmi 0.66 & 2002.4\plmi 25.7 & 887.0\plmi 13.3 \\
 & TU & 91.7\plmi 0.62 & 1164.4\plmi 33.0 & 31.7\plmi 11.6 \\
 & IterLP & 93.5\plmi 0.38 & 1113.1\plmi 29.1 & 0.10\plmi 0.31 \\
 & SW & 97.6\plmi 0.57 & 536.0\plmi 43.0 & 201.6\plmi 21.5 \\
 & NSW & 92.1\plmi 0.49 & \textbf{0.70\plmi 0.82}*** & 0.00\plmi 0.00*** \\
\cline{1-5}
\multirow[t]{6}{*}{0.8} 
 & Naive & 75.3\plmi 0.50 & 2730.0\plmi 6.73 & 1208.9\plmi 2.92 \\
 & Prod & 81.0\plmi 0.35 & 2597.9\plmi 5.56 & 1149.9\plmi 4.55 \\
 & TU & 80.4\plmi 0.72 & 2175.7\plmi 19.6 & 135.1\plmi 32.8 \\
 & IterLP & 84.8\plmi 0.24 & 1556.9\plmi 27.5 & 0.20\plmi 0.63 \\
 & SW & 90.5\plmi 0.30 & 1728.9\plmi 29.1 & 735.7\plmi 11.7 \\
 & NSW & 79.6\plmi 0.24 & \textbf{2.10\plmi 1.37}*** & 0.00\plmi 0.00*** \\
\cline{1-5}
\multirow[t]{6}{*}{1.0} 
 & Naive & 79.6\plmi 0.00 & 2701.0\plmi 0.00 & 1176.0\plmi 0.00 \\
 & Prod & 79.6\plmi 0.00 & 2701.0\plmi 0.00 & 1176.0\plmi 0.00 \\
 & TU & 79.6\plmi 0.00 & 2701.0\plmi 0.00 & 1176.0\plmi 0.00 \\
 & IterLP & 81.4\plmi 0.00 & 2010.0\plmi 0.00 & 16.0\plmi 0.00 \\
 & SW & 79.5\plmi 0.00 & 2701.0\plmi 0.00 & 1176.0\plmi 0.00 \\
 & NSW & 71.3\plmi 0.20 & \textbf{2.50\plmi 0.52}*** & 0.00\plmi 0.00*** \\
\bottomrule
\end{tabular}
\end{table}

The results of experiments with synthetic data are summarized in Table~\ref{tb:syenthetic-res-75-50-log} and Fig.~\ref{fig:synthetic_data}.

First, we note that as the popularity difference level $\lambda$ increases, the expected matches decrease and the number of envies increases for most methods.
As the popularity differences become more pronounced, the recommendation problem becomes increasingly difficult due to user congestion around popular agents.
The SW method achieves the highest expected matches in almost all cases except for $\lambda = 1.0$, while it causes relatively many envies when $\lambda \ge 0.6$.
Thus, we empirically showed that there exists a trade-off between the social welfare and the fairness of recommendation opportunities when there are popularity differences among users.

Second, the NSW method demonstrates remarkable performance in terms of both fairness and efficiency.
It maintains near-zero envy across all scenarios while achieving competitive expected match rates compared to SW and other methods.
Most notably, even with high popularity differences ($\lambda = 0.8$ and $\lambda=1.0$), where other methods exhibit significant unfairness, NSW continues to maintain high social welfare and near-zero envy levels.
We conducted the Wilcoxon signed-rank test comparing the number of envies for left and right users of the NSW method with those of SW method in Table~\ref{tb:syenthetic-res-75-50-log}, and verified statistical significance in all cases at $p < 0.05$.

Finally, our empirical analysis shows that the heuristic approach IterLP (Algorithm~\ref{alg:IterLP}) performs well in balanced scenarios, achieving relatively high social welfare while maintaining low levels of envy.
However, its performance deteriorates in the unbalanced case ($n = 75$, $m = 50$), particularly failing to mitigate envy among left-side agents.
In contrast, NSW demonstrates robust performance, maintaining near-zero envy for agents on both sides, as detailed in Table~\ref{tb:syenthetic-res-75-50-log}.

\begin{figure}[t]
    \begin{tabular}{cc}
    \begin{minipage}[t]{0.48\textwidth}
    \centering
    \includegraphics[width=1.0\linewidth]{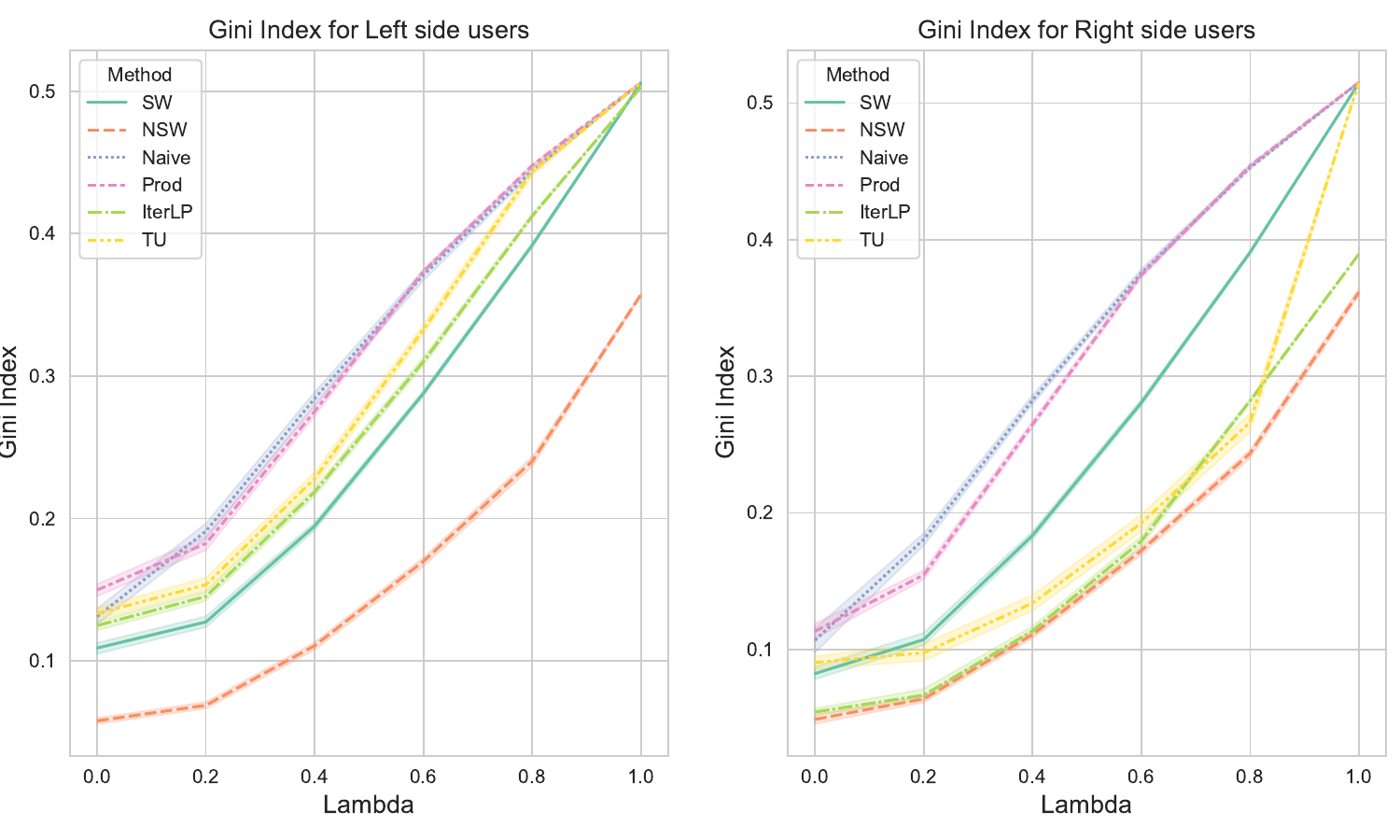}
    \subcaption{Gini index of expected matches for e = `log'.}
    \label{fig:synthetic-gini-log}
    \end{minipage}
    \begin{minipage}[t]{0.48\textwidth}
    \centering
    \includegraphics[width=1.0\linewidth]{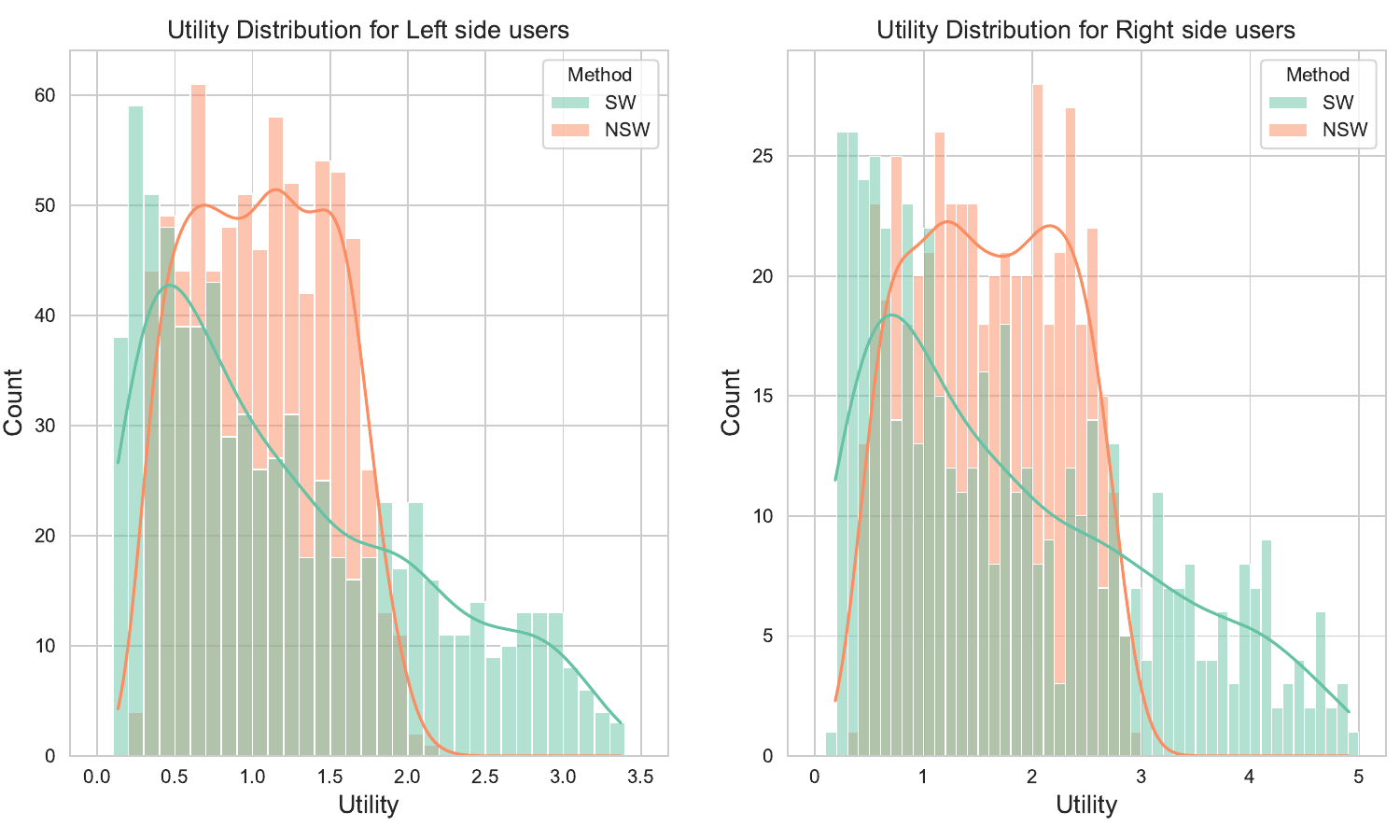}
    \subcaption{Utility distributions of 10 experiments for $\lambda = 0.8$, e = `log'.}
    \label{fig:synthetic-dist-log}
    \end{minipage}
    \\
    \begin{minipage}[t]{0.48\textwidth}
    \centering
    \includegraphics[width=1.0\linewidth]{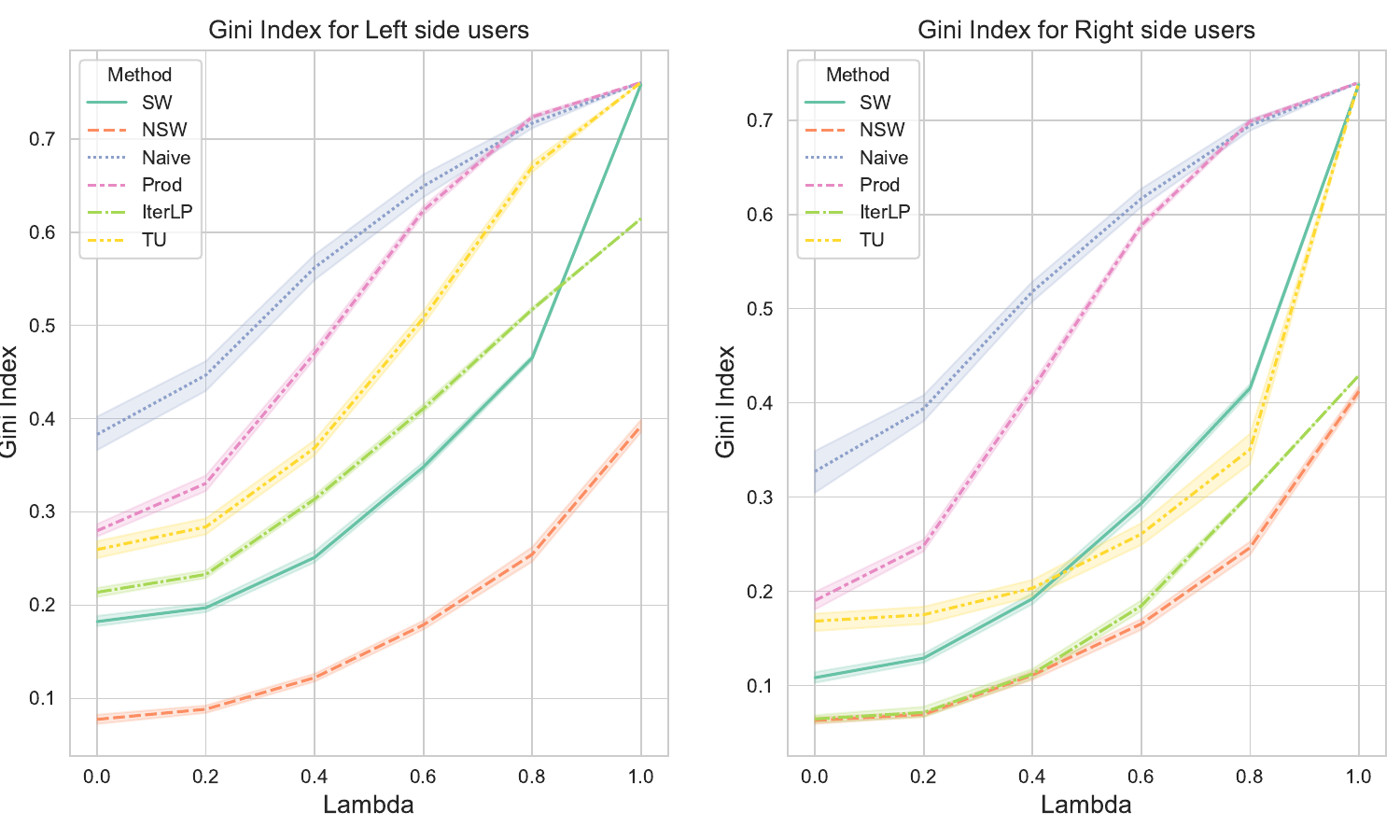}
    \subcaption{Gini index of expected matches for e = `inv'.}
    \label{fig:synthetic-gini-inv}
    \end{minipage}
    \begin{minipage}[t]{0.48\textwidth}
    \centering
    \includegraphics[width=1.0\linewidth]{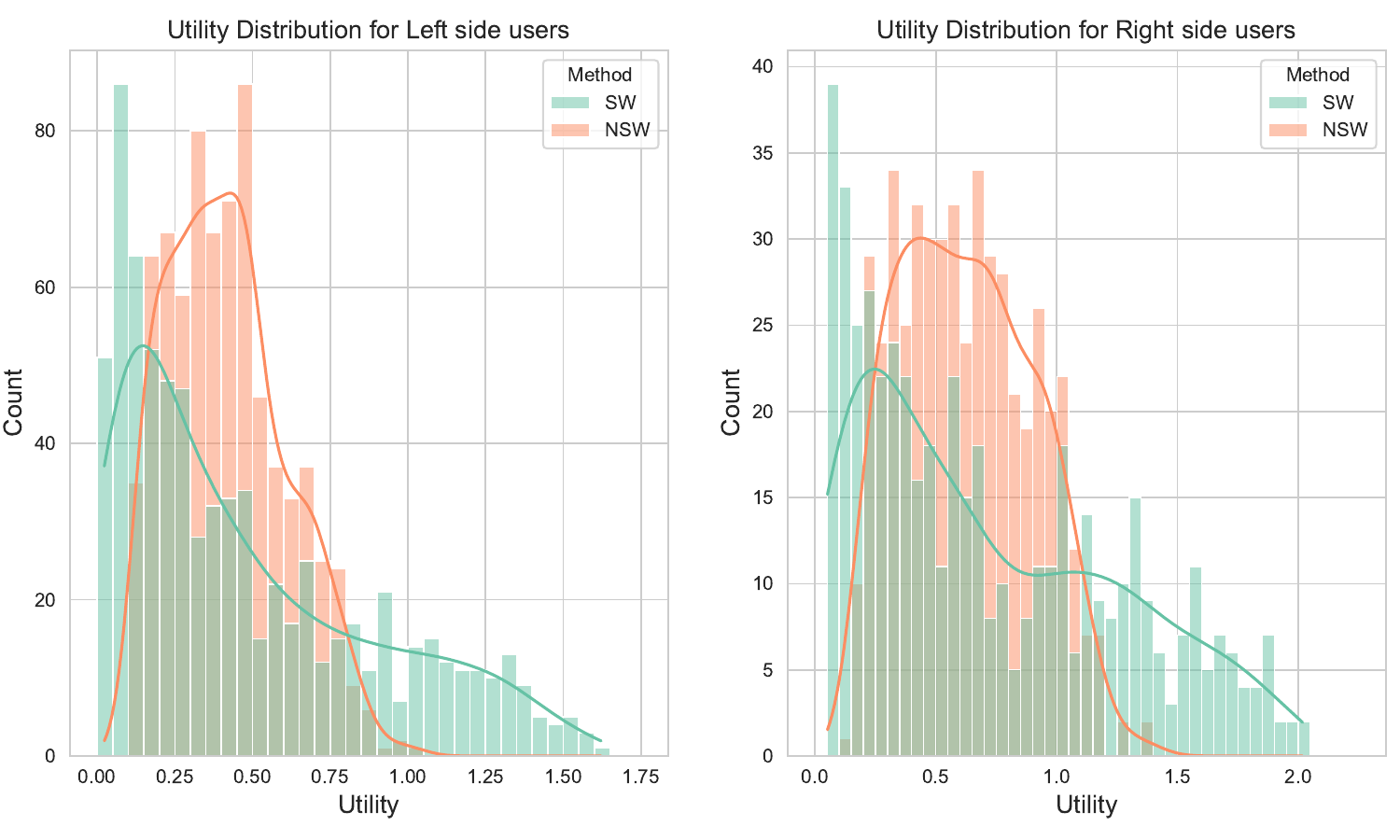}
    \subcaption{Utility distributions of 10 experiments for $\lambda = 0.8$, e = `inv'.}
    \label{fig:synthetic-dist-inv}
    \end{minipage}
    \end{tabular}
    \caption{The four left figures show the Gini index of users' utilities (expected number of matches) in the synthetic data experiment for the case where $n = 75$, $m = 50$, $\lambda \in \{0.0,0.2,0.4,0.6,0.8,1.0\}$, and $e = \text{``log''}$ (a) or $e =  \text{``inv''}$ (c). Mean value over 10 trials and 95\% CI are reported. 
    Four right figures are the utility distributions over all of 10 trials in the case of $n = 75$, $m = 50$, $\lambda = 0.8$, and $e = \text{``log''}$ (b) or $e =  \text{``inv''}$ (d).
    }
    \label{fig:other-fairness-criteria}
\end{figure}

In Fig.~\ref{fig:synthetic-gini-log} and \ref{fig:synthetic-gini-inv}, we report the Gini index of the user utilities for each method in the case of $n = 75, m = 50$ and the examination function is ``log'' or ``inv''.
In almost all cases, Naive and Prod methods are unfairer than other methods in terms of the Gini index, and the NSW method is the fairest among all methods.
SW is in the middle between the Naive/Prod methods and the NSW method.
Fig.~\ref{fig:synthetic-dist-log} and \ref{fig:synthetic-dist-inv} show the distributions of user utilities in $10$ experiments for SW and NSW methods for the case of $n = 75$, $m = 50$, and $\lambda = 0.8$.
The utility distributions of NSW seem to be concentrated around the median compared to those of SW.
Therefore, the NSW method is also confirmed to be fair in terms of the Gini index and distributions of the expected resulting matches.

\subsection{Synthetic Data Experiment II: Fairness-Social Welfare Trade-off}\label{sec:experiments:Synthetic_Data_Experiment_II}

\begin{figure}[t]
    \centering
    \includegraphics[width=\linewidth]{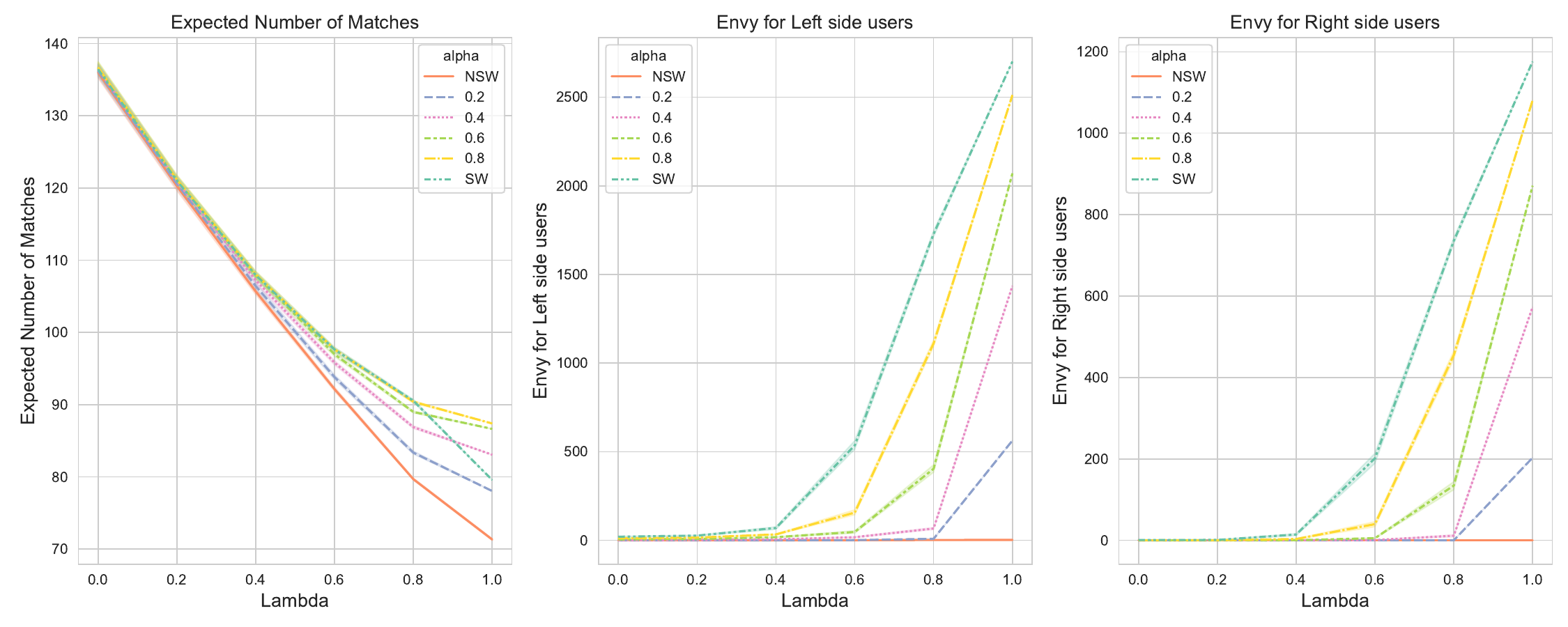}
    \includegraphics[width=0.666\linewidth]{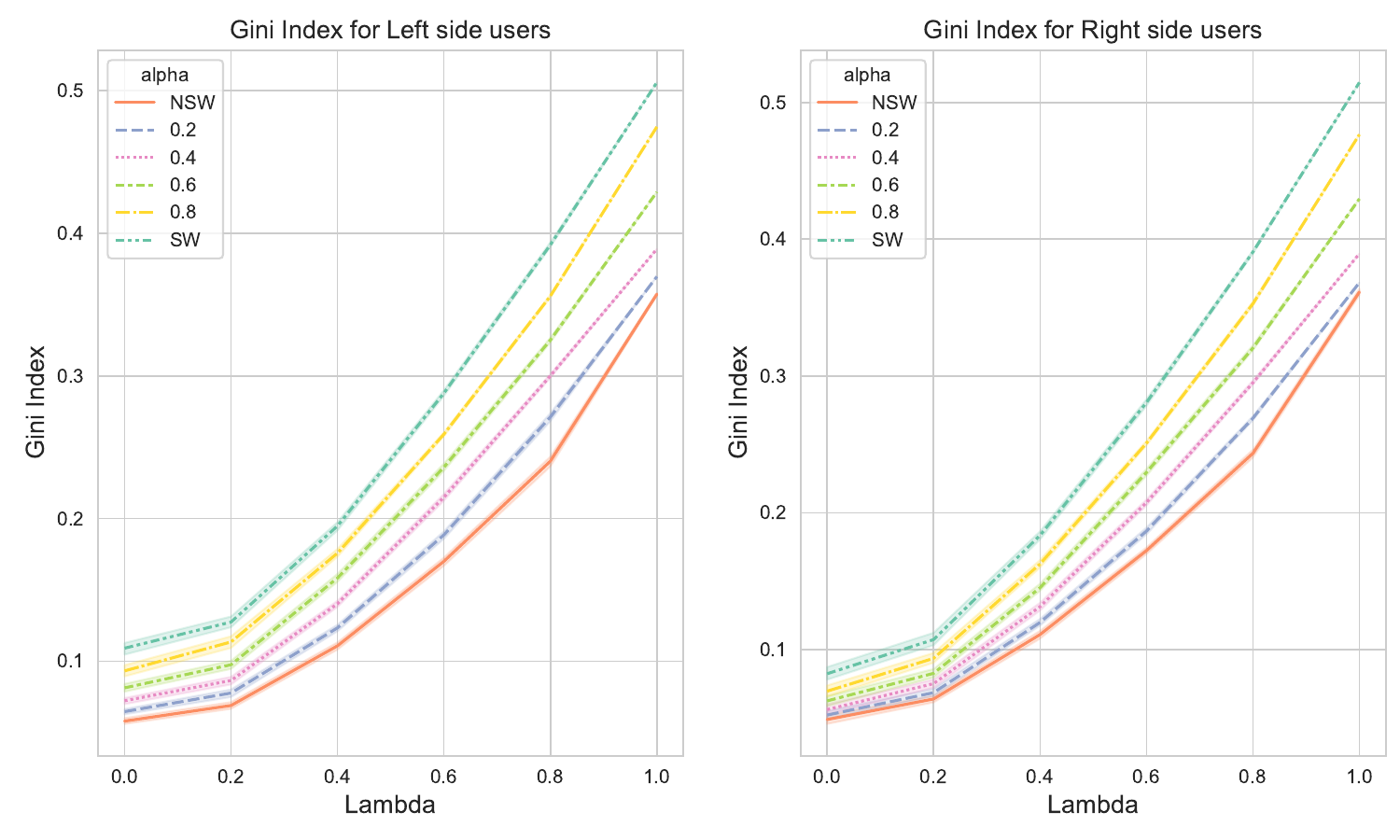}
    \caption{Results of the synthetic data experiments for $\alpha$-SW method. We report the expected number of matches, the number of envy among left/right side users and the Gini Index of left/right side users' utilities. We set $n = 75, m = 50$, $e = \text{``log''}$ and we vary $\lambda \in \{0.0, 0.2, 0.4, 0.6, 0.8, 1.0\}$. Mean value of 10 trials and 95\% CIs are shown, but in most cases CIs are invisible due to their small variations.}
    \label{fig:fairness-efficiency-trade-off}
\end{figure}

In Section~\ref{sec:trade-off-fairness-socialwelfare}, we introduced the $\alpha$-SW maximization method to balance the trade-off between fairness and social welfare.
In this subsection, we report the result of the experiment for the $\alpha$-SW maximization method.

\subsubsection{Experimental setup}
We set the number of left-side users to $n = 75$, the number of right-side users to $m = 50$, and the examination function to $e(k) = 1/\log_2(k+1)$ (``log'').
The preference probabilities $\hat{p}_1(i,j)$ and $\hat{p}_2(j,i)$ for all pairs $(i,j)$ are randomly generated similarly to the first experiment, where $\lambda$ varies in $\{0.0, 0.2, 0.4, 0.6, 0.8, 1.0\}$.
We compute the expected matches, the number of envies, and the Gini index for both sides for the $\alpha$-SW method with $\alpha \in \{0.2,0.4,0.6,0.8\}$, the SW method ($\alpha = 1.0$) and the NSW method ($\alpha \to 0.0$), where we use CVXPY with the CLARABEL solver for the LP problems.
Similarly to the first experiment, we sampled $10$ cases with random seeds of $0,\dots,9$ and reported the average numbers.

\subsubsection{Results}
The results of experiments for the $\alpha$-SW method are reported in Fig.~\ref{fig:fairness-efficiency-trade-off}.
As expected, the expected number of matches is increasing in $\alpha$ with the exception of $\lambda = 1.0$ and $\alpha = 1.0$ (the SW method).
It seems to be that SW is likely to fall into a local optimum when there is a very strong bias of popularity.
Except for such an extreme case, the $\alpha$-SW method with a larger $\alpha$ achieves more expected number of matches.

On the other hand, the $\alpha$-SW with a large $\alpha$ results in unfairness of recommended opportunities in terms of the number of envy and Gini index of user utilities.
Therefore, we can choose $\alpha \in (0,1]$ to balance the trade-off between social welfare and the fairness of the recommended opportunities.

\subsection{Synthetic Data Experiment III: Sinkhorn}\label{sec:experiments:Sinkhorn}

\begin{figure}[t]
    \centering
    \includegraphics[width=\linewidth]{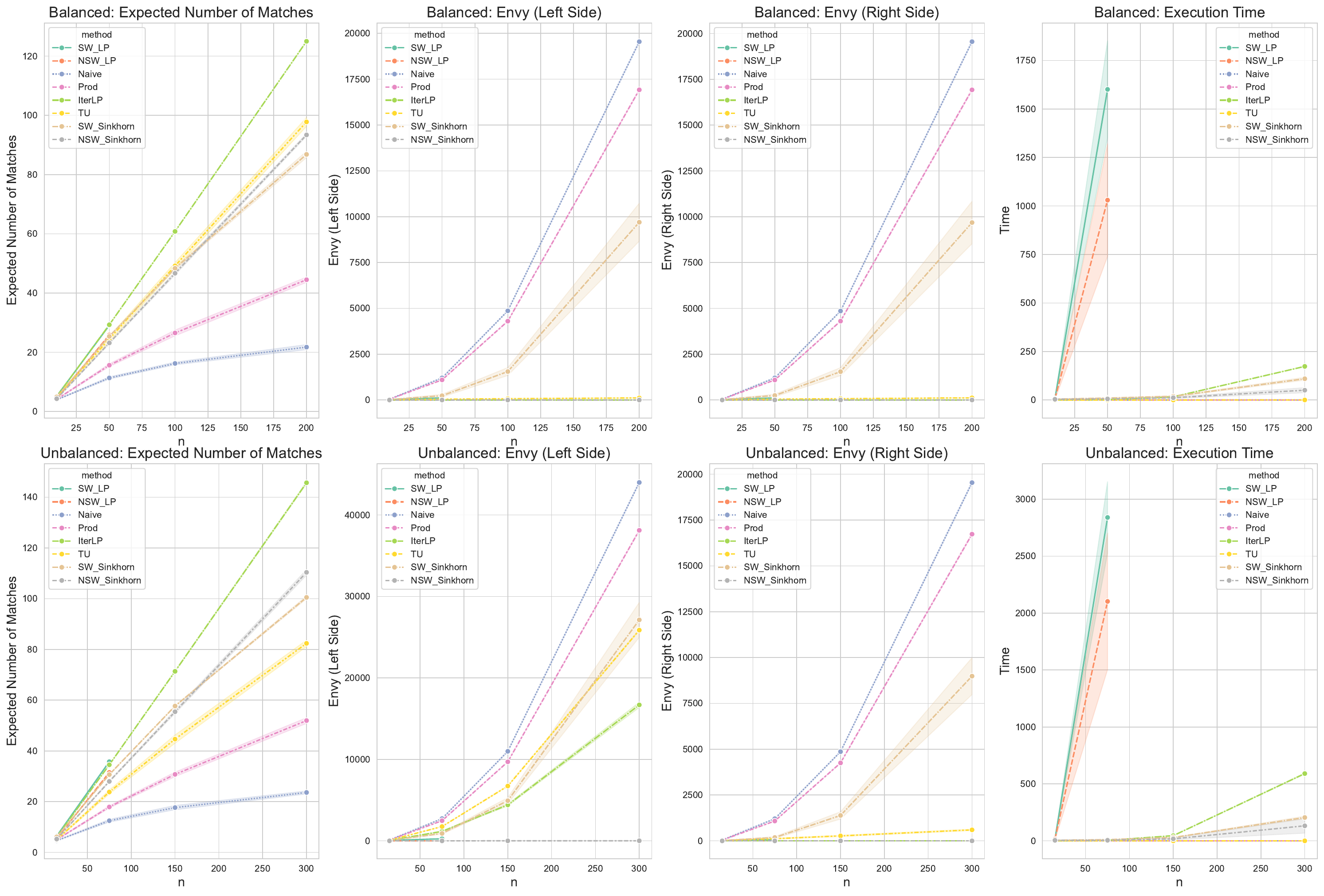}
    \caption{
        Results for the Sinkhorn-based methods (SW\_Sinkhorn and NSW\_Sinkhorn), compared against LP-based and baseline methods. 
        We set $\lambda=0.8$, and $e(k)=1/k$. 
        The first row shows the balanced case $n=m$, while the second row presents the unbalanced case $n=\mathrm{int}\left(\frac{3}{2}\cdot m \right)$.
        The first column reports the expected number of matches, the second and third columns show the number of envious pairs for the right-side and left-side agents, respectively, and the fourth column depicts execution time. Note that the results for SW\_LP and NSW\_LP are limited to instances with $n=10,15,50,75$ due to computational constraints.
    }
    \label{fig:sinkhorn}
\end{figure}

We now evaluate the performance of the Sinkhorn-based approach presented in Section~\ref{sec:Sinkhorn}, comparing it with other methods.

\subsubsection{Experimental setup}
Our experimental evaluation compares eight different methods. Two main approaches are LP-based methods (SW\_LP and NSW\_LP) and Sinkhorn-based methods (SW\_Sinkhorn and NSW\_Sinkhorn), alongside baseline methods (Naive, Prod, IterLP, and TU). 
The LP-based methods implement Algorithm~\ref{alg:alternating_SW_maximization} using an LP solver: SW\_LP employs $F_1(\A,\B)=F_2(\A,\B)=\mathrm{SW}(\A,\B)$, while NSW\_LP employs $F_1(\A,\B)=\log \mathrm{NSW}_1(\A,\B)$ and $F_2(\A,\B)=\log \mathrm{NSW}_2(\A,\B)$.
These LP-based methods were implemented using the CVXPY optimization library with the CLARABEL solver.
In contrast, the Sinkhorn-based methods, SW\_Sinkhorn and NSW\_Sinkhorn, utilize the same objective functions as their LP-based counterparts but employ the Sinkhorn algorithm as described in Algorithm~\ref{alg:alternating_Sinkhorn}.

As in our previous experiments, all methods were implemented in Python. We ran the non-Sinkhorn methods on a CPU with 16GB memory, while the Sinkhorn-based methods were executed on a GPU (NVIDIA T4 with 12GB memory and 15GB VRAM). Notably, implementing the non-Sinkhorn methods on the GPU did not improve performance due to the overhead associated with CPU-GPU data transfers.

We investigated both balanced and unbalanced scenarios in our experiments.
In the balanced scenario, we set $n=m$ with $n \in \{10, 50, 100, 200\}$.
For the unbalanced scenario, we selected $n \in \{15, 75, 150, 300\}$ and defined $m=\mathrm{int}\left(\frac{2}{3}\cdot m \right)$.
Following the methodology outlined in Section~\ref{sec:experiments:Synthetic_Data_Experiment_I}, we generated 10 random instances for each configuration. Here, the popularity parameter was set to $\lambda=0.8$ and the examination function was defined as $e(k)=\frac{1}{k}$. 
For the LP-based methods (SW\_LP and NSW\_LP), experiments were limited to instances with $n\in \{10, 50\}$ (balanced case) or $n \in \{15, 75\}$ (unbalanced case),  along with their corresponding 
$m$ values.
This restriction was imposed since larger instances required computational times exceeding 3000 seconds per run.
For the Sinkhorn-based methods (SW\_Sinkhorn and NSW\_Sinkhorn), the hyper-parameter $\tau$ for Algorithm~\ref{alg:Sinkhorn} is selected from $[100,500]$ by Optuna~\cite{optuna_2019} with TPESampler for each test case.
This tuning process is not included for the execution time reported below.

\subsubsection{Results}
Fig.~\ref{fig:sinkhorn} summarizes the experimental results for the Sinkhorn-based methods, evaluated with respect to three metrics: social welfare (the expected number of matches), fairness (as measured by the number of envious pairs), and execution time.

\paragraph{The expected number of matches}
For smaller instances (up to $n=50$ in the balanced case and $n=75$ in the unbalanced case), IterLP and SW\_LP produced high expected matches. Furthermore, IterLP achieved a high number of expected matches even for larger instances in both balanced and unbalanced cases.
As mentioned in Section~\ref{sec:Sinkhorn}, methods based on the Sinkhorn algorithm are approximate, which may result in lower social welfare (SW) values compared to those obtained through LP-based computations. 
However, in our experiments, the Sinkhorn-based methods demonstrated competitive performance, achieving a high number of expected matches in both the balanced and unbalanced cases.

\paragraph{Fairness}
In the balanced case with $n\le 50$, NSW\_LP exhibited almost zero envy. For $n\le 200$, IterLP, TU, and NSW\_Sinkhorn all maintained similarly low envy in both sides. 
In the unbalanced case, for experiments up to $n=50$, NSW\_LP maintained near-zero envy. 
Notably, NSW\_Sinkhorn consistently achieved near-zero envy even for larger instances up to $n=300$.
While IterLP performed well in terms of the expected number of matches, it exhibited higher envy among the right-side agents (the larger group) in the unbalanced case, with the number of envious pairs exceeding 15,000 on average for $n=300$.
Moreover, the remaining methods exhibited even higher levels of unfairness in such unbalanced scenarios.

\paragraph{Execution time}
Finally, we observe in Fig.~\ref{fig:sinkhorn} that Sinkhorn-based methods are considerably more efficient than LP-based methods.
For both balanced and unbalanced cases, SW\_LP and NSW\_LP required over 1,000 seconds to solve instances with $n=50$ or $n=75$.
In contrast, SW\_Sinkhorn and NSW\_Sinkhorn required only around 200 seconds even for larger instances with $n=200$ or $n = 300$.
Additionally, the computational time for IterLP increased significantly for larger instances; for example, in the unbalanced case with $n=300$, it required more than 500 seconds). These results highlight the scalability advantage of Sinkhorn-based methods, particularly for larger-scale problems.

\subsection{Real-World Data Experiment}\label{sec:experiments:Real-World_Data_Experiment}

To validate the performance of the methods in practical situations, we also report the results of experiments with two real-world dating datasets: one from a Japanese online dating platform and another from a speed dating experiment.

\subsubsection{A Japanese online dating platform data}
One dataset is from a Japanese online dating platform with millions of cumulative members. 
In this service, male users receive recommendation lists of female users and choose to send a ``like'' or not (``dislike'') to each user in the recommended list.
If a male user sends a ``like'' to a female user, she receives a notification and can choose to ``match'' with him or decline (``sorry'').
After a match is made, users can chat and communicate.
The process works similarly for female users, who receive recommendations of male users and respond to them.

To collect a moderate-sized dataset for our experiments and methods, we extracted data consisting of 200 male and 200 female users.
To alleviate the sparsity issue in training our Matrix Factorization-based preference model, we sampled a 200 $\times$ 200 user matrix where users had relatively more interactive actions with each other.
Using ``like''/``dislike'' data of male users and ``match''/``sorry'' data of female users in these sampled data, we compute preference scores $\hat{p}_1(i, j)$ and $\hat{p}_2(j, i)$ using the Alternating Least Squares (ALS) method~\cite{koren2009matrix}.

\subsubsection{Speed dating experiment data}
The other is the \emph{speed dating experiment} dataset, originally collected by \citet{fisman2006gender} from experimental speed dating events conducted from 2002 to 2004, and publicly available on Kaggle~\cite{speeddatingdataset}.
In this experiment, male and female participants engaged in four-minute conversations and then indicated whether they wished to meet their partner again.
The dataset contains $274$ participants of gender 0 and $277$ participants of gender 1.
Each participant engaged in speed dating with approximately 10 participants of the opposite gender within a session and made a decision of whether to ``accept'' each partner.

Based on this dataset, we estimated preference scores using the ALS method and used these scores as the preference inputs for our experiments.
Accordingly, this experiment is conducted using preference scores estimated from real-world behavioral data involving approximately 270 participants on each side.

\subsubsection{Experimental procedure}

For both datasets, we conducted simulation experiments using the preferences explained above.
The examination function is set to either $e(k) = 1/\log(k+1)$ (``log'') or $1/k$ (``inv'').
For each recommendation method, we computed the corresponding recommendation policy, simulated the matching process, and measured the expected number of matches and the number of envies.
The methods compared are Naive, Prod, TU, IterLP, SW\_LP, NSW\_LP, SW\_Sinkhorn, NSW\_Sinkhorn, and $\alpha$-SW\_Sinkhorn ($\alpha = 0.2, 0.4, 0.6, 0.8$), where SW\_LP and NSW\_LP were conducted only for the Japanese online dating dataset because it is infeasible to compute them for the speed dating data given the size of the dataset.
For the Sinkhorn-based methods (SW\_Sinkhorn, NSW\_Sinkhorn, and $\alpha$-SW\_Sinkhorn), the hyperparameter $\tau$ was tuned using Optuna~\cite{optuna_2019} based on the expected number of matches after 5 iterations of Algorithm~\ref{alg:alternating_Sinkhorn} with 10 sampling trials in the range of $[10,10000]$ with log scale, and the value achieving the best performance was adopted.

\subsubsection{Results}\label{sec:experiments:Real_World_Data:results}

\begin{table}[t]
    \centering
    \caption{Results of experiments with the Japanese online dating data}
    \label{tab:realdata}
    \begin{tabular}{llrrrrr}
    \toprule
    e & Method & Expected Matches & \multicolumn{2}{c}{Envy} & \multicolumn{2}{c}{Gini index} \\
     & &  & \multicolumn{1}{c}{Men} & \multicolumn{1}{c}{Women} & \multicolumn{1}{c}{Men} & \multicolumn{1}{c}{Women} \\
    \midrule
    log & Naive & 60.083079 & 1495 & 3016 & 0.574828 & 0.604412 \\
     & Prod & 106.002036 & 765 & 608 & 0.548021 & 0.623122 \\
     & IterLP & 96.826036 & 171 & 75 & 0.515242 & 0.547144 \\
     & TU & 102.694717 & 736 & 695 & 0.559982 & 0.639779 \\
     & SW\_LP & 111.373207 & 434 & 331 & 0.517677 & 0.597365 \\
     & NSW\_LP & 90.388330 & 31 & 14 & 0.380742 & 0.476984 \\
     & SW\_Sinkhorn & 96.593995 & 1606 & 2028 & 0.549705 & 0.639192 \\
     & $\alpha$-SW\_Sinkhorn ($\alpha = 0.8$) & 93.381703 & 672 & 474 & 0.479193 & 0.577669 \\
     & $\alpha$-SW\_Sinkhorn ($\alpha = 0.6$)  & 95.250181 & 333 & 215 & 0.456943 & 0.552828 \\
     & $\alpha$-SW\_Sinkhorn ($\alpha = 0.4$) & 89.091450 & 100 & 70 & 0.427996 & 0.521816 \\
     & $\alpha$-SW\_Sinkhorn ($\alpha = 0.2$) & 86.415991 & 47 & 27 & 0.415157 & 0.510851 \\
     & NSW\_Sinkhorn & 79.041906 & 42 & 19 & 0.396544 & 0.492324 \\
    \cline{1-7}
    inv & Naive & 23.219063 & 2061 & 3561 & 0.714112 & 0.725502 \\
     & Prod & 62.206237 & 942 & 756 & 0.634676 & 0.686189 \\
     & IterLP & 66.544071 & 134 & 51 & 0.555366 & 0.561728 \\
     & TU & 62.470271 & 884 & 866 & 0.644661 & 0.700632 \\
     & SW\_LP & 74.948328 & 330 & 254 & 0.550223 & 0.609980 \\
     & NSW\_LP & 59.373563 & 19 & 8 & 0.382252 & 0.470557\\
     & SW\_Sinkhorn & 51.831140 & 1538 & 2165 & 0.513491 & 0.610453 \\
     & $\alpha$-SW\_Sinkhorn ($\alpha = 0.8$) & 54.697330 & 772 & 861 & 0.466043 & 0.568090 \\
     & $\alpha$-SW\_Sinkhorn ($\alpha = 0.6$) & 56.070438 & 279 & 179 & 0.430077 & 0.512655 \\
     & $\alpha$-SW\_Sinkhorn ($\alpha = 0.4$) & 60.086682 & 140 & 88 & 0.451711 & 0.519969 \\
     & $\alpha$-SW\_Sinkhorn ($\alpha = 0.2$) & 53.275528 & 73 & 47 & 0.432418 & 0.503906 \\
     & NSW\_Sinkhorn & 44.314802 & 45 & 35 & 0.439580 & 0.502645 \\
    \bottomrule
    \end{tabular}
\end{table}
\begin{table}[t]
    \centering
    \caption{Results of experiments with the speed dating experiment data}
    \label{tab:kaggledata}
    \begin{tabular}{llrrrrr}
    \toprule
    e & Method & Expected Matches & \multicolumn{2}{c}{Envy} & \multicolumn{2}{c}{Gini index} \\
     & &  & \multicolumn{1}{c}{Gender 0} & \multicolumn{1}{c}{Gender 1} & \multicolumn{1}{c}{Gender 0} & \multicolumn{1}{c}{Gender 1} \\
    \midrule
    log & Naive & 92.713261 & 14701 & 12494 & 0.490840 & 0.532977 \\
     & Prod & 178.639153 & 5836 & 5623 & 0.545096 & 0.560084 \\
     & IterLP & 180.471494 & 983 & 771 & 0.468215 & 0.479547	 \\
     & TU & 171.696808 & 4768 & 4833 & 0.542474 & 0.555645	 \\
     & SW\_Sinkhorn & 160.038641 & 4417 & 4043 & 0.520178 & 0.533525 \\
     & $\alpha$-SW\_Sinkhorn ($\alpha = 0.8$) & 162.096504 & 3076 & 2803 & 0.498031 & 0.512871 \\
     & $\alpha$-SW\_Sinkhorn ($\alpha = 0.6$)  & 156.624234 & 2042 & 1871 & 0.477870 & 0.491285 \\
     & $\alpha$-SW\_Sinkhorn ($\alpha = 0.4$) & 154.991313 & 734 & 788 & 0.439138 & 0.455036 \\
     & $\alpha$-SW\_Sinkhorn ($\alpha = 0.2$) & 148.641272 & 162 & 214 & 0.410317 & 0.427646 \\
     & NSW\_Sinkhorn & 131.987824 & 17 & 18 & 0.401670 & 0.419427 \\
    \cline{1-7}
    inv & Naive & 9.052270 & 15817 & 13417 & 0.734963 & 0.764334	 \\
     & Prod & 51.346129 & 5840 & 5588 & 0.741575 & 0.746111	 \\
     & IterLP & 83.202273 & 162 & 117 & 0.523850 & 0.527999	 \\
     & TU & 56.838873 & 4013 & 4049 & 0.718418 & 0.720368	 \\
     & SW\_Sinkhorn & 54.836027 & 4205 & 3954 & 0.599449 & 0.604131	 \\
     & $\alpha$-SW\_Sinkhorn ($\alpha = 0.8$) & 61.122392 & 1827 & 1798 & 0.531419 & 0.545869 \\
     & $\alpha$-SW\_Sinkhorn ($\alpha = 0.6$) & 60.184377 & 382 & 514 & 0.451615 & 0.459314	 \\
     & $\alpha$-SW\_Sinkhorn ($\alpha = 0.4$) & 65.095997 & 139 & 170 & 0.454878 & 0.461176	 \\
     & $\alpha$-SW\_Sinkhorn ($\alpha = 0.2$) & 54.634679 & 62 & 103 & 0.488815 & 0.498133	 \\
     & NSW\_Sinkhorn & 51.304954 & 18 & 15 & 0.432176 & 0.447612 \\
    \bottomrule
    \end{tabular}
\end{table}

The results are shown in Table~\ref{tab:realdata} and \ref{tab:kaggledata}.
For the Japanese online dating dataset, SW\_LP achieves the highest expected number of matches in both the ``log'' and ``inv'' cases.
The second-highest method differs depending on the examination function: Prod in the ``log'' case and IterLP in the ``inv'' case.
For NSW\_LP, SW\_Sinkhorn, and $\alpha$-SW\_Sinkhorn (with larger $\alpha$), the expected number of matches is comparable to that of other methods such as Prod, IterLP, and TU.

In contrast, for the speed dating experiment dataset, IterLP attains the highest expected number of matches in both the ``log'' and ``inv'' cases.
In synthetic data experiments, IterLP also achieves a high expected number of matches in many settings, second only to SW\_LP (which is not evaluated in the present experiment), and thus these results are consistent with our prior findings.
On the speed dating dataset as well, SW\_Sinkhorn and $\alpha$-SW (with relatively large $\alpha$) achieve expected numbers of matches comparable to those of existing methods other than IterLP.

With respect to the number of envies, NSW\_LP yields the lowest values on the Japanese online dating dataset, followed by NSW\_Sinkhorn.
On the speed dating experiment dataset as well, NSW\_Sinkhorn achieves the smallest number of envies.
Moreover, these methods also produce more equitable matching outcomes than other approaches in terms of the Gini index.
Therefore, these experimental results suggest that our NSW\_LP and NSW\_Sinkhorn methods are effective for providing fair recommendations even when applied to real-world datasets.

\section{Limitation}
We acknowledge several limitations.
First, we conducted experiments on synthetic data with varying popularity bias and two real-world datasets. However, publicly available large-scale datasets from matching platforms are generally scarce, and validating our methods on data with millions of users remains an important next step.
Second, although the Sinkhorn-based algorithm significantly improves scalability compared to LP-based methods, larger instances still require substantial computation time. Further algorithmic improvements would be needed for deployment on platforms with millions of users.
Third, our framework assumes that preference probabilities are estimated in advance and fixed during the recommendation process. In practice, preferences and recommendations are updated incrementally as new user interactions are observed. Extending our methods to such online settings is a direction for future work.

\section{Conclusion and Future Work}

In this study, we introduced a model of reciprocal recommendations inspired by online dating platforms and formalized the notions of double envy-freeness and Pareto optimality for users on both sides of the platform.
We defined the platform's social welfare as the expected number of matches produced by a given recommendation policy and described an alternating SW maximization procedure.
Our experiments reveal that while the SW maximization approach effectively increases the number of matches, it can lead to significant unfairness, as measured by violations of double envy-freeness.
To address this issue, we proposed an alternating NSW maximization algorithm implemented via the Frank-Wolfe algorithm.
To improve scalability, we further introduced an accelerated approach based on the Sinkhorn algorithm, which significantly reduces computational time while yielding double envy-free outcomes in nearly all cases.

Through experiments conducted on both synthetic and real-world data, we demonstrate that the NSW method achieves nearly zero envy while maintaining competitive levels of social welfare.
These results suggest that our NSW methods are effective for two-sided matching platforms where fairness among users is crucial, such as online dating platforms.

For future work, developing theoretically rigorous exact or approximation algorithms for NSW maximization would be an interesting direction.
Additionally, adapting alternative fairness frameworks---such as Walrasian equilibrium criteria~\cite{xia2019we} and Lorenz efficiency~\cite{Virginie2021}---to our model with explicit mutual interest requirements could provide complementary approaches to fair reciprocal recommendation.
Finally, online evaluation methodologies such as A/B testing would provide valuable insights into the real-world performance of our methods.

\bibliographystyle{ACM-Reference-Format}
\bibliography{main}

\appendix

\section{Notation}\label{appendix:notation}

For the reader's convenience, Table~\ref{tab:notation} summarizes the main notation used throughout this paper.

\begin{table}[htbp]
\centering
\caption{Summary of notation.}
\label{tab:notation}
\begin{tabular}{lcl}
\toprule
Category & Symbol & Description \\
\midrule
\textit{Agents} & $N = \{a_1, a_2, \ldots, a_n\}$ & Set of $n$ agents on the left side \\
 & $M = \{b_1, b_2, \ldots, b_m\}$ & Set of $m$ agents on the right side \\
\midrule
\textit{Preference} & $\hat{p}_1(i,j)$ & Preference probability of agent $a_i$ for agent $b_j$ \\
 & $\hat{p}_2(j,i)$ & Preference probability of agent $b_j$ for agent $a_i$ \\
 & $p_{ij}$ & Product $\hat{p}_1(i,j) \cdot \hat{p}_2(j,i)$ \\
\midrule
\textit{Recommendations} & $A_i \in \mathbb{R}^{m \times m}_{\geq 0}$ & Doubly stochastic matrix for agent $a_i$ \\
 & $B_j \in \mathbb{R}^{n \times n}_{\geq 0}$ & Doubly stochastic matrix for agent $b_j$ \\
 & $A_i(j,k)$ & Prob. that $b_j$ appears at position $k$ in $a_i$'s list \\
 & $(\bm{A}, \bm{B})$ & Recommendation policy \\
\midrule
\textit{Examination} & $e(k)$ & Examination probability at position $k$ \\
\midrule
\textit{Utilities} & $U_i(\bm{A}, \bm{B})$ & Utility (expected matches) of agent $a_i$ \\
 & $V_j(\bm{A}, \bm{B})$ & Utility (expected matches) of agent $b_j$ \\
\midrule
\textit{Welfare} & $\mathrm{SW}(\bm{A}, \bm{B})$ & Social welfare (total expected matches) \\
 & $\mathrm{NSW}_1(\bm{A}, \bm{B})$ & Nash social welfare for left-side agents \\
 & $\mathrm{NSW}_2(\bm{A}, \bm{B})$ & Nash social welfare for right-side agents \\
 & $\mathrm{W}_1^{\alpha}(\bm{A}, \bm{B})$ & $\alpha$-social welfare for left-side agents \\
 & $\mathrm{W}_2^{\alpha}(\bm{A}, \bm{B})$ & $\alpha$-social welfare for right-side agents \\
\midrule
\textit{Parameters} & $\alpha \in (0, 1]$ & Trade-off parameter in $\alpha$-SW method \\
 & $\tau > 0$ & Regularization parameter in Sinkhorn algorithm \\
\bottomrule
\end{tabular}
\end{table}

\section{Additional Experiments}\label{appendix:additional_experiments}
In this appendix, we report the results of two additional experiments: robustness check to preference estimation errors, sensitivity, and effects of hyperparameter $\tau$ in Sinkhorn-based methods.

\subsection{Robustness check to preference estimation errors}\label{appendix:robustness}

To validate the robustness of our methods to preference estimation errors, we added perturbation terms to preference scores.
Specifically, we first generate true preference probabilities $\hat{p}_1(i, j)$ and $\hat{p}_2(i, j)$ as in the synthetic data experiment I (Section~\ref{sec:experiments:Synthetic_Data_Experiment_I}), and then compute perturbed preference probabilities $\tilde{p}_1(i, j)$ and $\tilde{p}_2(i, j)$:
\begin{equation*}
    \tilde{p}_1(i, j) = \min\left\{\max\left\{\hat{p}_1(i, j) + \varepsilon_1(i,j), 0 \right\}, 1\right\}
    \quad\mathrm{and}\quad
    \tilde{p}_2(i, j) = \min\left\{\max\left\{\hat{p}_2(i, j) + \varepsilon_2(i,j), 0 \right\}, 1\right\},
\end{equation*}
where $\varepsilon_1(i, j)$ and $\varepsilon_2(i, j)$ are drawn independently from a normal distribution with mean $0$ and standard deviation $\sigma$.
For each method, the recommendation policies are based on the perturbed preferences $\tilde{p}_1(i, j),~\tilde{p}_2(i, j)$, and simulation processes for evaluation are conducted with the true preferences $\hat{p}_1(i, j),~\hat{p}_2(i, j)$.
Other settings are similar to that of the synthetic experiment I in Section~\ref{sec:experiments:Synthetic_Data_Experiment_I}.
We test the cases with $n = 75$, $m = 50$, $e(k) = 1/\log(k)$ (``log'') or $1/k$ (``inv''), $\lambda = 0.8$ and $\sigma \in \{0.0, 0.1, 0.2, 0.3, 0.4, 0.5\}$, where the case with $\sigma = 0.0$ is equivalent to the setting of the synthetic experiment I.
We generate $10$ sample sets of preferences with random seed $0,\dots,9$, and report the mean values.

The results are shown in Fig.~\ref{fig:robustness-to-estimation-errors}.
As the standard deviation $\sigma$ of error terms grows, the expected numbers of matches decrease in almost all methods, except for the Prod method of the case with the ``inv'' examination.
Although the performances of most methods decline in the case with estimation errors, the simple Prod method might be relatively unaffected with the errors.
The SW method achieves very high level of expected number of matches in most cases, and the performance of the NSW method also does not decline so quickly, comparing to other methods.
On the other hand, the numbers of envies decrease in many methods as the estimation errors grow.
Regardless of estimation errors, the NSW method yield extremely few envies.
Therefore, it is shown that our NSW method is robust to preference estimation errors in terms of the envy-free fairness.

\begin{figure}[htbp]
    \centering
    \includegraphics[width=1.0\linewidth]{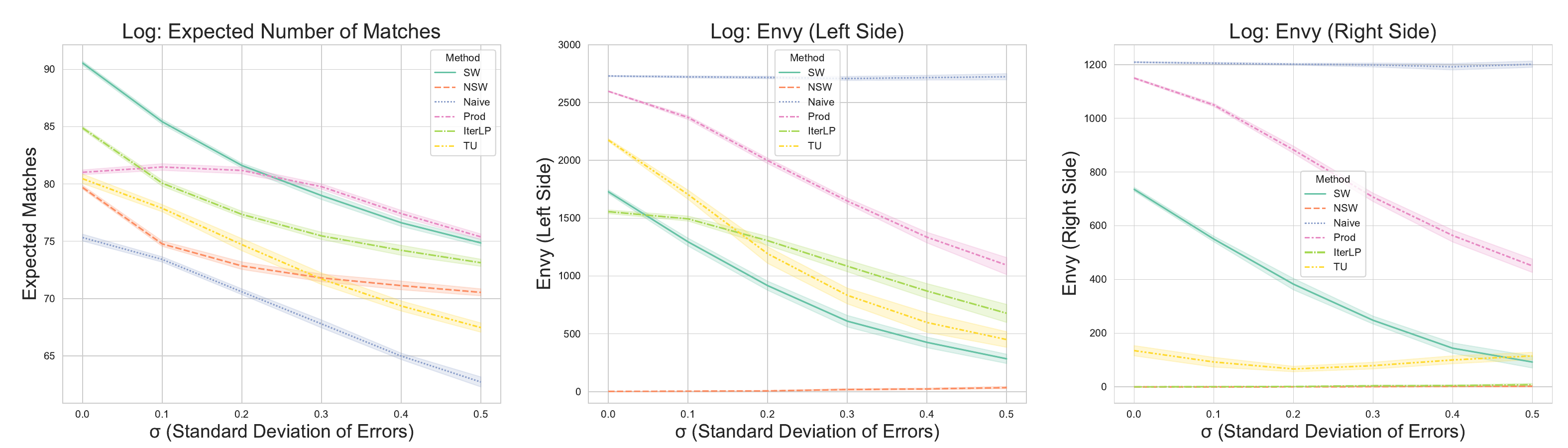}
    \includegraphics[width=0.666\linewidth]{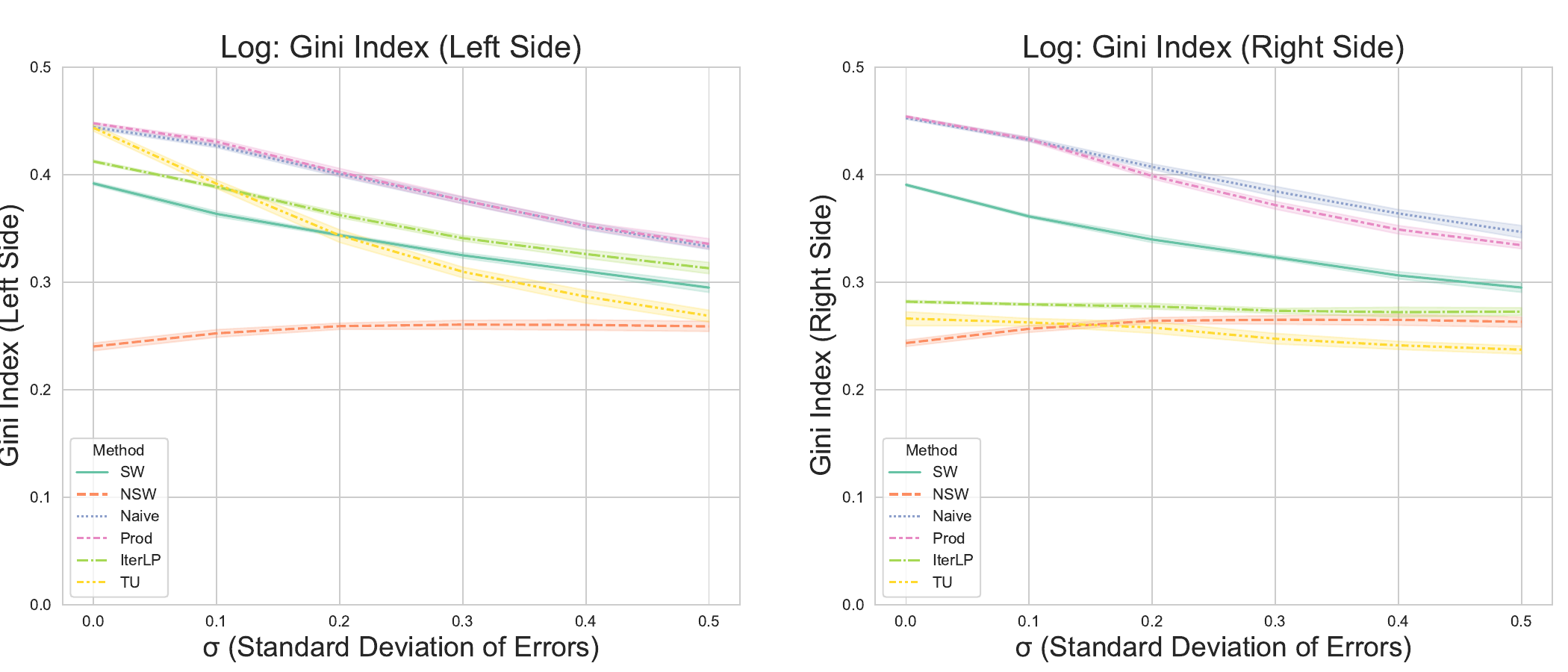}
    \includegraphics[width=1.0\linewidth]{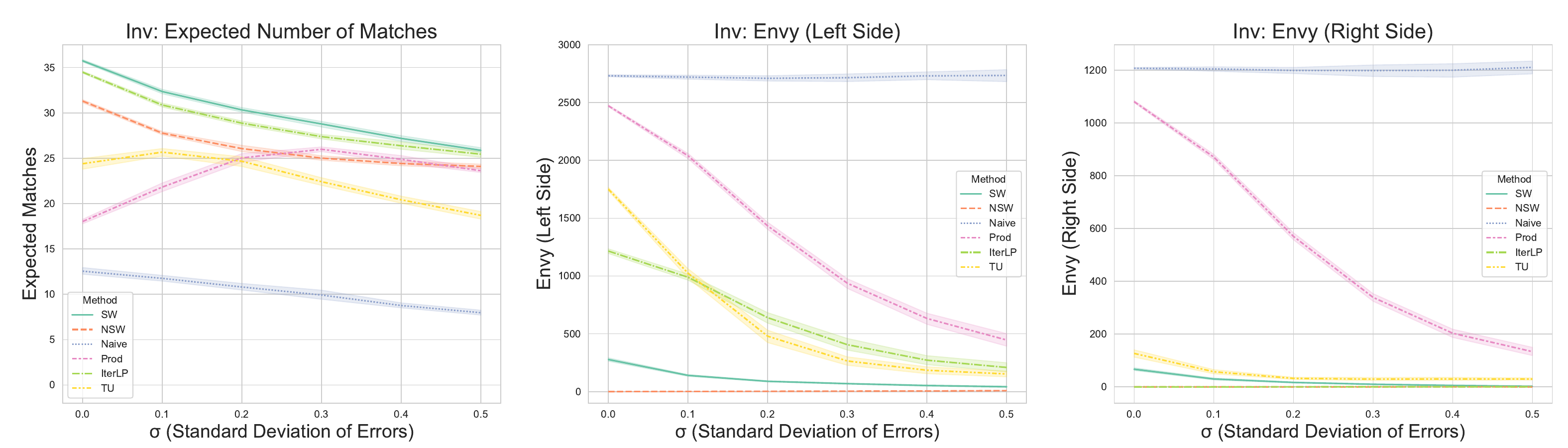}
    \includegraphics[width=0.666\linewidth]{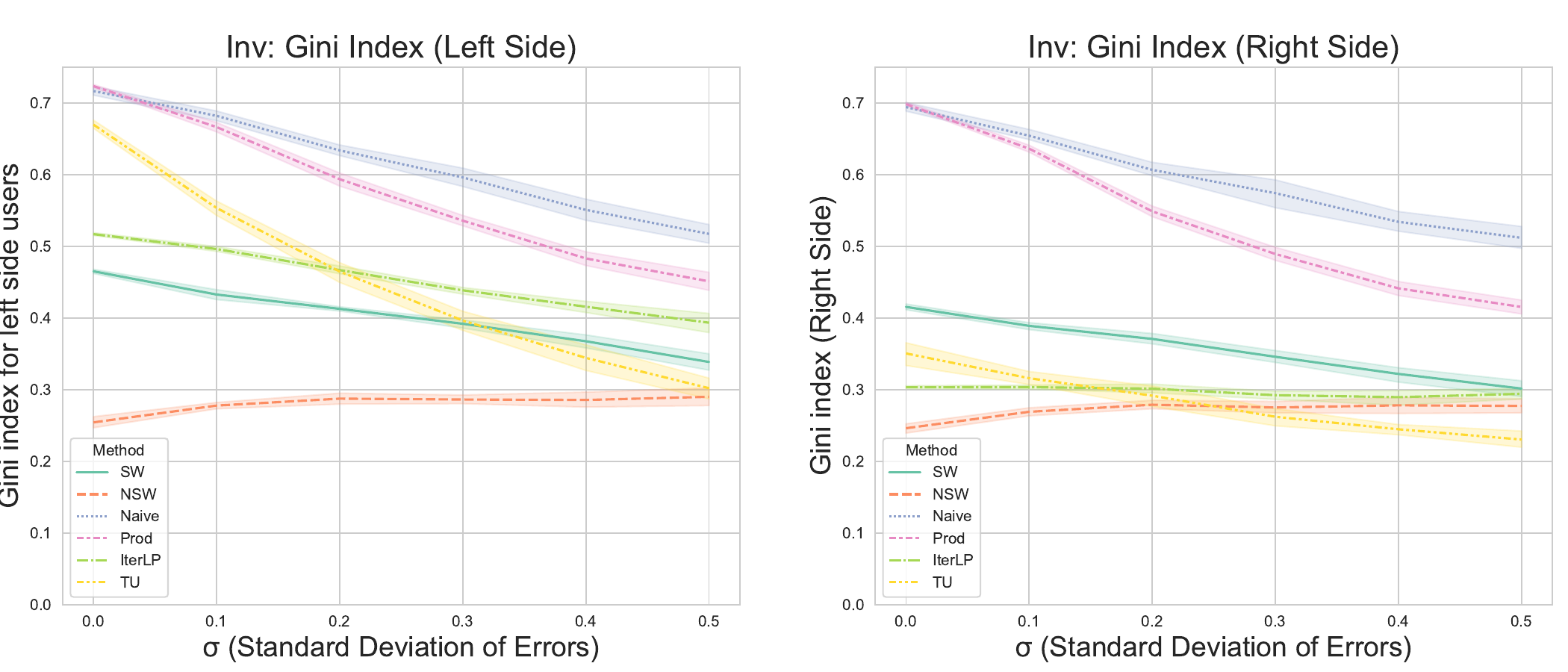}
    \caption{Results for the robustness check experiments to preference estimation errors. We set $n = 75, m = 50, \lambda = 0.8$. The examination functions are $e(k) = 1/\log(k+1)$ (``log'') in the first and second rows, and $e(k) = 1/k$ (``inv'') in the third and fourth rows. We vary the standard deviation of errors $\sigma \in \{0.0, 0.1, 0.2, 0.3, 0.4, 0.5\}$ in the $x$-axis of each graph. We generated 10 samples for each case and reported the average values of the trials and 95 \% confidence intervals.}
    \label{fig:robustness-to-estimation-errors}
\end{figure}

\subsection{Sensitivity to Hyperparameter $\tau$ in Sinkhorn-based algorithms}\label{appendix:sensitivity}

\begin{figure}[htbp]
    \centering
    \includegraphics[width=0.78\linewidth]{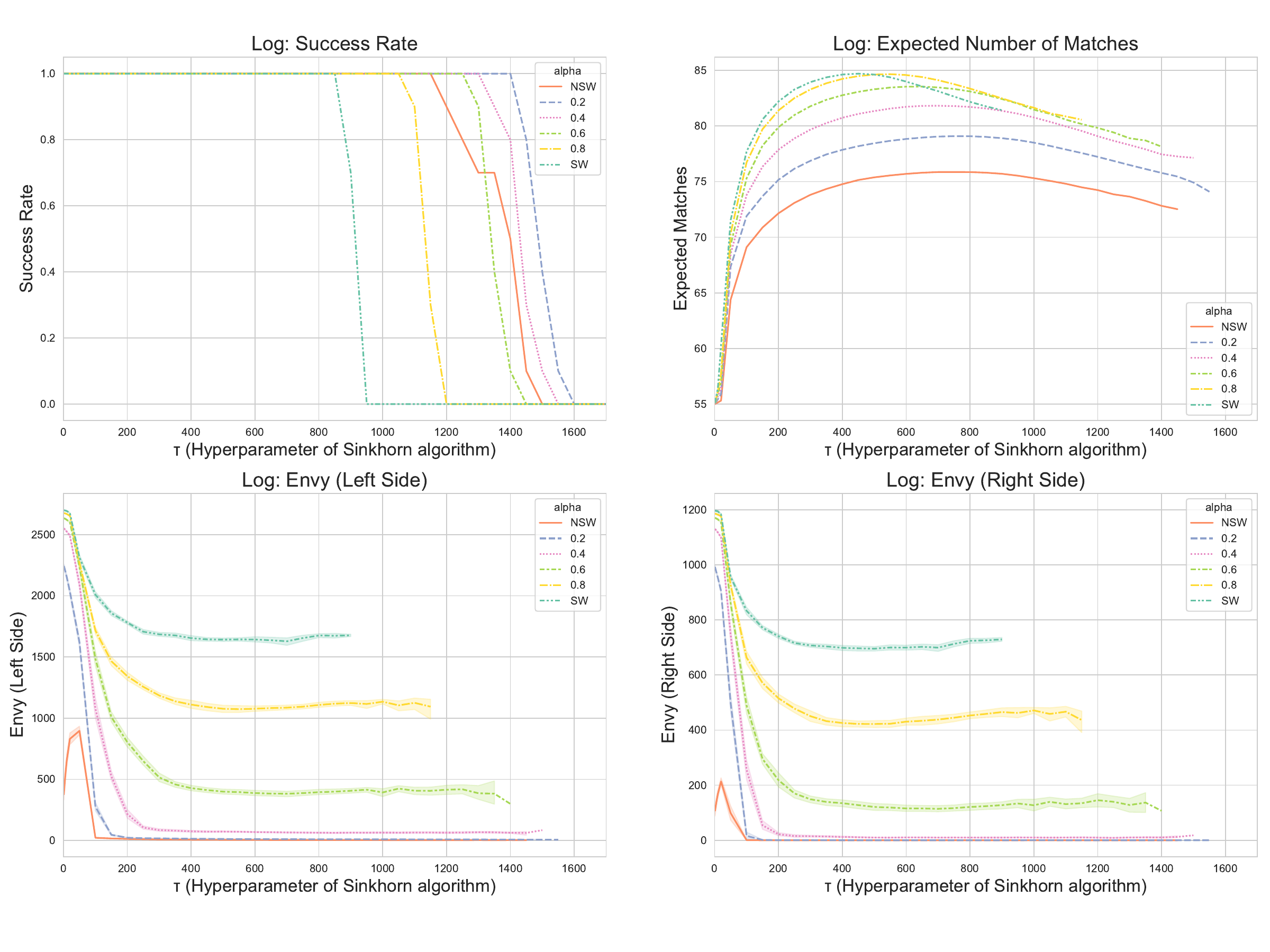}
    \includegraphics[width=0.78\linewidth]{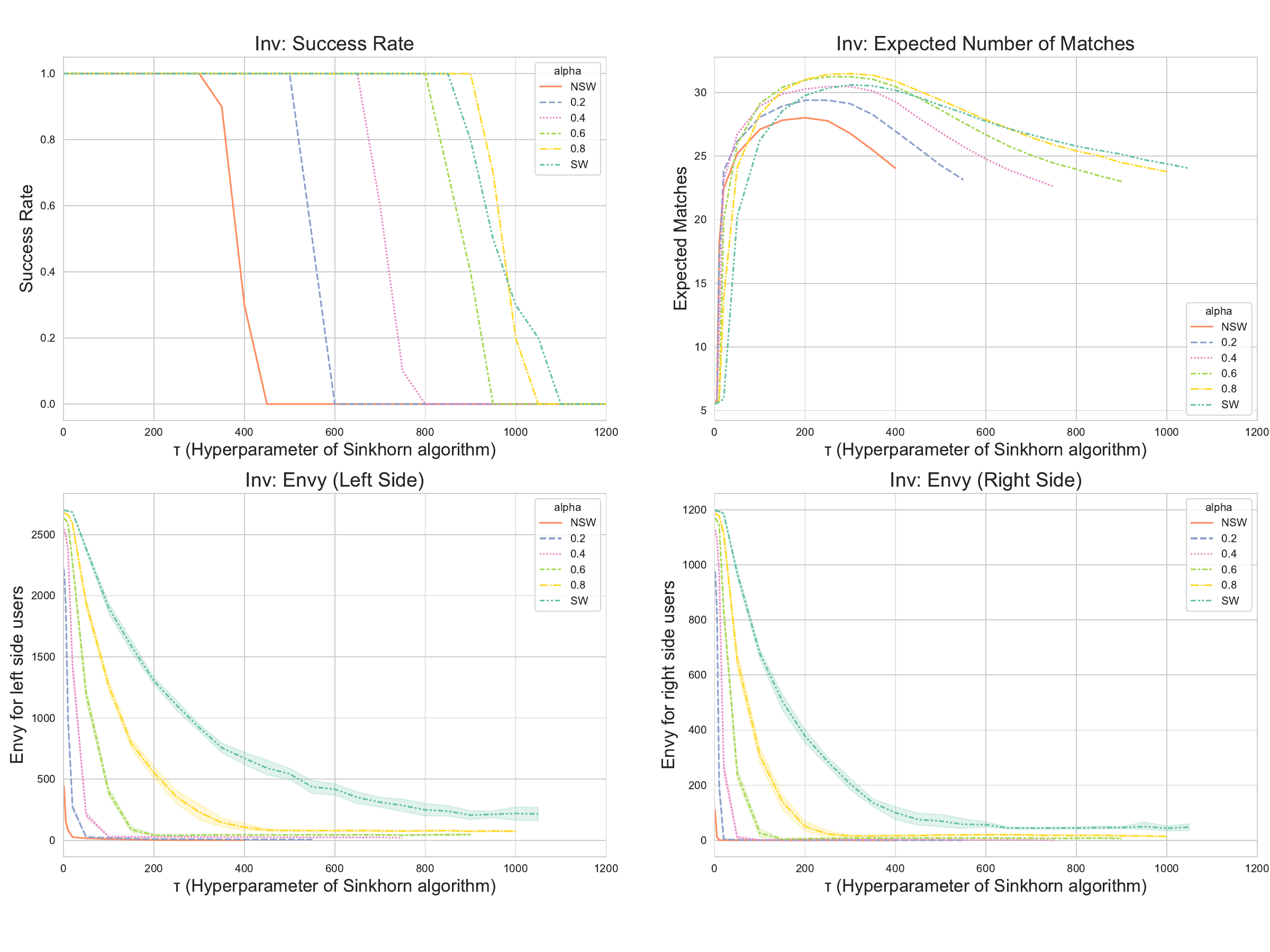}
    \caption{Sensitivity analysis of the Sinkhorn hyperparameter $\tau$. We set $n = 75, m = 50, \lambda = 0.8$. The examination functions are $e(k) = 1/\log(k+1)$ (``log'') in the first and second rows, and $e(k) = 1/k$ (``inv'') in the third and forth rows. We test the SW, NSW and $\alpha$-SW ($\alpha=0.2,0.4,0.6,0.8$) accelerated by the Sinkhorn algorithm. We vary the hyperparameter $\tau \in \{1, 2, 5, 10, 20, 50, 100, \dots,1950,2000\}$ of the Sinkhorn algorithm, which is the $x$-axis of each graph. We generated 10 samples, reported success rates of the 10 trials, and showed the average values and 95 \% confidence intervals of the metrics.}
    \label{fig:sinkhorn-tau-experiment}
\end{figure}

Finally, we report the results of experiments in which the hyperparameter $\tau$ is varied in order to examine its effect in methods based on the Sinkhorn algorithm (Algorithm~\ref{alg:alternating_Sinkhorn}).
The basic experimental setup is the same as in the synthetic data experiment I (Section~\ref{sec:experiments:Synthetic_Data_Experiment_I}), with $n = 75, m = 50, \lambda = 0.8, e(k) = 1/\log(k+1)$~(``log'') or $1/k$ (``inv'').
We evaluate the $\alpha$-SW methods accelerated by the Sinkhorn algorithm, with $\alpha = 0.0$ (NSW), $0.2, 0.4, 0.6, 0.8$ and $1.0$ (SW).
The hyperparameter $\tau$ is set to $1,2,5,10,20,50,\dots,2000$, where $\tau$ is increased in increments of $50$ from $50$ to $2000$.
Similarly to other synthetic data experiments, we sample $10$ sets of preference scores with random seed of $0, \dots, 9$, and report success rates of computation in the 10 trials and average values of expected matches and numbers of envies in succeeded cases.

The results are shown in Fig.~\ref{fig:sinkhorn-tau-experiment}.
First, in our implementation, computation fails due to overflow when $\tau$ takes sufficiently large values. This threshold depends on $\alpha$ and the functional form of $e(k)$.
In addition, the expected number of matches has a unique maximal point as $\tau$ varies, whereas the number of envy decreases almost monotonically as $\tau$ increases.
Therefore, for practical hyperparameter selection, our results suggest that when fairness is prioritized, it is beneficial to choose $\tau$ as large as possible within the computable range, while when the expected number of matches is prioritized, tuning $\tau$ to the value that maximizes this metric is effective.

\end{document}